\newtheorem{theorem}{Theorem}
\newtheorem{lemma}{Lemma}
\newtheorem{corollary}{Corollary}
\theoremstyle{remark}
\newtheorem{remark}{Remark}
\newcommand{\cH}{\mathcal{H}}
\newcommand{\R}{\mathbb{R}}
\newcommand{\C}{\mathbb{C}}
\DeclareMathOperator{\ind}{ind}
\numberwithin{subsection}{section}
\begin{document}

\title[Magnetic perturbation and nodal count: quantum graphs]{Stability of eigenvalues
  of quantum graphs with respect to magnetic perturbation and the
  nodal count of the eigenfunctions}

\author{Gregory Berkolaiko and Tracy Weyand}
\address{Department of Mathematics, Texas A\&M University, College
  Station, TX 77843-3368, USA}

  \begin{abstract}
  We prove an analogue of the magnetic nodal theorem on quantum
  graphs: the number of zeros $\phi$ of the $n$-th eigenfunction of the
  Schr\"odinger operator on a quantum graph is related to the
  stability of the $n$-th eigenvalue of the perturbation of the
  operator by magnetic potential.  More precisely, we consider the
  $n$-th eigenvalue as a function of the magnetic perturbation and
  show that its Morse index at zero magnetic field is equal to $\phi -
  (n-1)$.
  \end{abstract}

\keywords{Quantum graphs, nodal count, zeros of eigenfunctions,
  magnetic Schr\"odinger operator, magnetic-nodal connection}

\maketitle


\section{Introduction}

A quantum graph is a metric graph equipped with a self-adjoint
differential ``Hamiltonian'' operator (usually of Schr\"odinger type)
defined on the edges and matching conditions specified at the
vertices.  Graph models in general, and quantum graphs in particular,
have long been used as a simpler setting to study complicated
phenomena.  We refer the interested reader to the reviews
\cite{Kuc_wrm02,GnuSmi_ap06,Kuc_incol08}, collections of papers
\cite{BerCarFulKuc_eds06,ExnKeaKuc_eds08}, and the recent monograph
\cite{BK_graphs} for an introduction to quantum graphs and their
applications.

Quantum graphs have been especially fruitful models for studying the
properties of zeros of the eigenfunctions
\cite{BanOreSmi_pspm08,BanBerSmi_ahp12}.  Of particular interest is
the relationship between the sequential number of the eigenfunction
and the number of its zeros, which we will refer to as the \emph{nodal
  point count}.  It was on quantum graphs that the relationship
between the stability of the nodal partition of an eigenfunction and
its nodal deficiency was first discussed \cite{Banetal_cmp12}.  Since
then, the result has been extended to discrete graphs
\cite{BerRazSmi_jpa12} and bounded domains in $\R^d$
\cite{BerKucSmi_gafa12}.

In a similar-spirited development, it has been discovered that the
nodal point count on discrete graphs is connected to the stability of
the eigenvalue with respect to a perturbation by a magnetic field
\cite{Ber_prep11} (see also \cite{Ver_prep12} for an alternative
proof).  While the magnetic result drew inspiration from the
developments for nodal partitions, the relationship between the two
results was very implicit in the original proof \cite{Ber_prep11} and
was not at all relevant in the proof of \cite{Ver_prep12}.

The purpose of the current paper is three-fold.  We prove an analogue
of the magnetic theorem of \cite{Ber_prep11} on quantum graphs.  This
is done by establishing a clear and explicit link between magnetic
perturbation and the perturbation of the nodal partition.  Along the
way, we remove some superfluous (and troublesome) assumptions from the
nodal partitions theorem of \cite{Banetal_cmp12}.

A proof of the magnetic theorem on the simplest of quantum graphs, a
circle, has already been found in \cite{Ver_prep12}.  This proof uses
the explicitly available nodal point count and thus is impossible to generalize
to any non-trivial graph.  However, we do acknowledge drawing
inspiration (in particular, in the use of Wronskian) from the work of
\cite{Ver_prep12}.

Finally, we would like to mention that the main result of the present
paper has already been used by R.~Band to prove an elegant ``inverse
nodal theorem'' on quantum graphs \cite{Ban_prep12}, which appears in
this same volume.


\section{Main results}

We start by defining the quantum graph, following the notational
conventions of \cite{BK_graphs}.  We also refer the reader to
\cite{BK_graphs} for the proofs of all background results used in this
section.

Let $\Gamma$ be a compact metric graph with vertex set $V$ and edge
set $E$.  Let $\widetilde{H}^k(\Gamma,\C)$ be the space of all complex-valued
functions that are in the Sobolev space $H^k(e)$ for each edge, or in other
words
\begin{equation*}
  \widetilde{H}^k(\Gamma,\C) = \oplus_{e \in E} H^k(e),
\end{equation*}
while $\widetilde{H}^k(\Gamma,\R)$ will denote the space of all real-valued
functions that are in the Sobolev space $H^k(e)$ for each edge.
We define $\widetilde{L}^k(\Gamma,\C)$ and
$\widetilde{L}^k(\Gamma,\R)$ similarly.  Consider the Schr\"odinger
operator with electric potential $q:\Gamma\rightarrow \R$
defined by
\begin{equation*}
  H^0: f \mapsto -\frac{d^2f}{dx^2} + q f,
\end{equation*}
acting on the functions from $\widetilde{H}^2(\Gamma,\C)$ satisfying the
$\delta$-type boundary conditions
\begin{equation}\label{eq:vconditions}
  \left\{\begin{array}{l}
      f(x) \mbox{ is continuous at v}, \\
      \sum_{e\in {E_v}} \frac{df}{dx_e}(v) = \chi_vf(v), \qquad
      \chi_v\in \R.
    \end{array}\right.
\end{equation}
Here the potential $q(x)$ is assumed to be piecewise continuous.
The set $E_v$ is the set of edges joined at the vertex $v$; by
convention, each derivative at a vertex is taken into the
corresponding edge. We denote by $x_e$ the local coordinate on edge $e$.

On vertices of degree one, we also allow the Dirichlet condition
$f(v)=0$, which is formally equivalent to $\chi_v = \infty$.  In this
case, we do not count the Dirichlet vertex as a zero, neither when
specifying restrictions on the eigenfunction nor when counting its
zeros.

The operator $H^0$ is self-adjoint, bounded from below, and has a
discrete set of eigenvalues that can be ordered as
\begin{equation*}
  \lambda_1 \leq \lambda_2 \leq \ldots \leq \lambda_n \leq \ldots \hspace{.1in}.
\end{equation*}

The magnetic Schr\"odinger operator on $\Gamma$ is given by
\begin{equation*}
  H^A(\Gamma): f \mapsto -\left(\frac{d}{dx} - iA(x)\right)^2f + qf,
  \qquad f \in \widetilde{H}^2(\Gamma,\C),
\end{equation*}
where the one-form $A(x)$ is the magnetic potential (namely, the sign of
$A(x)$ changes with the orientation of the edge).  The $\delta$-type
boundary conditions are now modified to
\begin{equation*}
  \left\{\begin{array}{l}
      f(x) \mbox{ is continuous at v}, \\
      \sum_{e\in {E_v}} \left(\frac{df}{dx_e}(v) - iA(v)f(v)\right) = \chi_vf(v), \qquad
      \chi_v\in \R.
    \end{array}\right.
\end{equation*}

Let $\beta = |E| - |V| + 1$ be the first Betti number of the graph
$\Gamma$, i.e.\ the rank of the fundamental group of the graph.
Informally speaking, $\beta$ is the number of ``independent'' cycles
on the graph.  Up to a change of gauge, a magnetic field
on a graph is fully specified by $\beta$ fluxes $\alpha_1,
\alpha_2, \ldots, \alpha_\beta$, defined as
\begin{equation*}
  \alpha_j = \oint_{\sigma_j} A(x) dx \mod 2\pi,
\end{equation*}
where $\{\sigma_j\}$ is a set of generators of the fundamental group.
In other words, magnetic Schr\"odinger operators with different
magnetic potentials $A(x)$, but the same fluxes
$(\alpha_1,\ldots, \alpha_\beta)$, are unitarily equivalent.  Therefore,
the eigenvalues $\lambda_n(H^A)$ can be viewed as functions of
$\boldsymbol{\alpha} = (\alpha_1,\ldots, \alpha_\beta)$.

In this paper, we will prove the following main result:
\begin{theorem}
  \label{thm:main_mag}
  Let $\psi$ be the eigenfunction of $H^0$ that corresponds to a
  simple eigenvalue $\lambda = \lambda_n(H^0)$.  We assume that $\psi$
  is non-zero on vertices of the graph.  We denote by $\phi$ the number
  of internal zeros of $\psi$ on $\Gamma$.

  Consider the perturbation $H^A$ of the operator $H^0$ by a magnetic
  field $A$ with fluxes $\boldsymbol\alpha = (\alpha_1,\ldots,
  \alpha_\beta)$.  Then $\boldsymbol{\alpha} = (0,\ldots, 0)$ is a
  non-degenerate critical point of the function
  $\lambda_n(\boldsymbol{\alpha}) := \lambda_n(H^A)$ and its Morse index is
  equal to the nodal surplus $\phi - (n - 1)$.
\end{theorem}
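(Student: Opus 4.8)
The plan is to compute the first and second derivatives of $\lambda_n(\boldsymbol\alpha)$ at the origin using analytic perturbation theory, and to identify the Hessian with a quadratic form whose signature is governed by the nodal count. First I would fix a gauge so that the magnetic potential $A$ is supported on a set of $\beta$ edges (a "spanning-tree complement"), with flux $\alpha_j$ concentrated on the $j$-th such edge; equivalently, after a unitary gauge transformation one may assume the operator is $H^0$ but with modified matching conditions at one endpoint of each non-tree edge, where the continuity condition $f$ is replaced by $f(v^+) = e^{i\alpha_j} f(v^-)$. In this picture the dependence of $H^A$ on $\boldsymbol\alpha$ is explicit and analytic (indeed it enters only through the boundary conditions), so by Rellich's theorem the simple eigenvalue $\lambda$ and its eigenfunction $\psi$ extend to real-analytic families $\lambda(\boldsymbol\alpha)$, $\psi(\boldsymbol\alpha)$ near $\boldsymbol\alpha = 0$.

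Next I would show $\nabla\lambda_n(0) = 0$. The Hellmann–Feynman-type formula gives $\partial\lambda/\partial\alpha_j$ as a boundary term: differentiating the eigenvalue equation and pairing with $\psi$, the bulk terms cancel and one is left with a Wronskian-type expression $\overline{\psi}\,\psi' - \psi\,\overline{\psi'}$ evaluated at the cut edge. Since $\psi$ is real at $\boldsymbol\alpha = 0$ (it is an eigenfunction of the real operator $H^0$), this Wronskian vanishes, so $0$ is a critical point. To get the Hessian, I would differentiate once more; the second derivative $\partial^2\lambda/\partial\alpha_i\partial\alpha_j$ at $0$ is expressible through the first-order correction $\dot\psi$ to the eigenfunction, which solves an inhomogeneous boundary-value problem $(H^0 - \lambda)\dot\psi = 0$ in the bulk with inhomogeneous (purely imaginary) jump data at the cut edges. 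Writing $\dot\psi = i\,\widetilde\psi$ with $\widetilde\psi$ real and solving edge-by-edge, the Hessian becomes a real symmetric $\beta\times\beta$ matrix built from the values of $\widetilde\psi$ and $\psi$ at the $\beta$ cut points.

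The central step — and the main obstacle — is to identify the signature of this $\beta\times\beta$ Hessian matrix $M$ with the nodal surplus $\phi - (n-1)$. Here I would invoke the nodal-partition machinery: consider the graph $\Gamma$ with the $\phi$ internal zeros of $\psi$ adjoined as new (degree-two) vertices, and regard $\psi$ as the ground state on each of the resulting nodal subdomains. The Morse index of the energy as a function of how one moves the nodal points is known (by the quantum-graph analogue of the nodal-deficiency theorem, stated in the introduction as the result we are sharpening) to equal $\phi - (n-1)$. The explicit link I would establish is that perturbing the $\beta$ fluxes is, to second order, equivalent to a particular $\beta$-dimensional family of nodal-point perturbations: a flux on a cycle forces a phase mismatch that, when the eigenfunction is pushed back to being real, displaces the nodes along that cycle. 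Concretely, I would show the magnetic Hessian $M$ and the nodal-deficiency Hessian have the same inertia by exhibiting $M$ as a Schur complement (or a congruent restriction) of the larger nodal Hessian — the tree directions contributing only negative eigenvalues that account for the difference between $\phi$ and $\beta$. Finally, non-degeneracy of the critical point follows because the nodal Hessian is non-degenerate on the relevant subspace (this is where the hypothesis that $\psi$ does not vanish at vertices is essential: it guarantees the nodal points are interior to edges and can be freely and independently moved, so the second-order form is non-degenerate in every flux direction).
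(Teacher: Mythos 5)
Your gauge reduction (fluxes entering only through jump conditions at $\beta$ cut points) is the same as Lemma~\ref{lem:unit_equiv}, and your criticality argument via the Hellmann--Feynman/Wronskian identity with $\psi$ real at $\boldsymbol{\alpha}=0$ is a legitimate alternative to the paper's spectral-symmetry argument (Theorem~\ref{neglambda}). The gap is in your central identification step. First, you misstate the nodal-partition result: on the $\beta$-parameter family of equipartitions the Morse index of the energy at the nodal partition of $\psi$ is the nodal \emph{deficiency} $n-\nu=(n-1)+\beta-\phi$ (since $\nu=\phi-\beta+1$ when the zeros break all cycles), not the surplus $\phi-(n-1)$; the two are complementary within the $\beta$ flux directions, and that complementarity is precisely the content of the step you skip. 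A real flux $\alpha_j$ makes the eigenfunction genuinely complex, and its zeros generically disappear rather than ``displace along the cycle,'' so there is no direct congruence (or Schur complement relation) between the magnetic Hessian and a restriction of the nodal Hessian; taken at face value your identification would output $(n-1)+\beta-\phi$ rather than $\phi-(n-1)$. What does correspond to moving nodes while keeping the eigenfunction real is the Wick-rotated, non-self-adjoint family with real jumps $f(c_j^+)=e^{\alpha_j}f(c_j^-)$ as in \eqref{eq:real_jump_cond}; its eigenvalue branch is obtained from $\lambda_n(H^{\boldsymbol{\alpha}})$ by $\boldsymbol{\alpha}\to-i\boldsymbol{\alpha}$, which negates the quadratic term and converts an index $k$ in $\beta$ variables into $\beta-k$. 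That sign flip, together with a diffeomorphism between the jump parameters and the $\delta$-type parameters $\boldsymbol{\gamma}$ of the cut tree, is the missing mechanism.

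Second, you cannot simply ``invoke'' the quantum-graph nodal-deficiency theorem: in its published form it assumes non-degeneracy of the critical point \emph{a priori}, which is exactly why it is unsuitable here, and your remark that the nodal points ``can be freely and independently moved'' does not prove that the second-order form is non-degenerate. Establishing that non-degeneracy (and the index $(n-1)+\beta-\phi$) for the cut problem is a substantive computation in its own right: one has to work with the Lagrange functional of $H_{\boldsymbol{\gamma}}$, decompose trial functions as $f_0+\sum_j s_j\rho_{j,\lambda}$ with prescribed jumps, use a Wronskian identity on the graph to match $\rho_{j,\lambda}'(c_j^+)+\rho_{j,\lambda}'(c_j^-)$ with $\psi'(c_j^\pm)/\psi(c_j^\pm)$, and read off the inertia from $2\times2$ blocks with vanishing $\gamma_j\gamma_j$ entry, as in the proof of Theorem~\ref{thm:main_cut}. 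So your outline needs three concrete additions before it becomes a proof: a rigorous parametrization linking flux-type and node-moving perturbations (via the imaginary-flux operator), the sign flip back to real flux, and an actual proof, not a citation, of the non-degenerate Morse index for the partition/cut problem.
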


To prove this theorem, we will study the eigenvalues of a tree which
is obtained from $\Gamma$ by cutting its cycles and introducing
parameter-dependent $\delta$-type conditions on the newly formed
vertices.  The parameter-dependent eigenvalues of the cut tree will be
related to the magnetic eigenvalues $\lambda_n(\boldsymbol{\alpha})$ via an
intermediate operator, which can be viewed as a magnetic Schr\"odinger
operator with imaginary magnetic field.


\section{Cutting the graph}  \label{sec:cutting}

A spanning tree of a graph $\Gamma = \{V,E\}$ is a tree composed of all the vertices
$V$ and a subset of the edges $E$ that connects all of the vertices
but forms no cycles.  Choose a spanning tree of the graph $\Gamma$ and let $C$ be the set of
edges that is complementary to the chosen tree.  It is a classical
result that $|C|=\beta$ independently of the chosen spanning tree.  On
each of the edges from $C$, we choose an arbitrary point $c_j$.  If we
cut the graph at all points $\{c_j\}$, each point will give rise to
two new vertices which will be denoted $c_j^+$ and $c_j^-$, see
Figure~\ref{fig:cutting}.  The new graph is a tree and will be denoted
$T$; it can be viewed as a metric analogue of the notion of the spanning tree.  By
specifying different vertex conditions on the new vertices $c_j^\pm$,
we will obtain several parameter-dependent families of quantum trees.

The first family makes precise the above discussion of equivalences
among the magnetic operators $H^A$.  This easy result can be
found, for example, in \cite{BK_graphs,KosSch_cmp03,Rue_prep11}.

\begin{lemma} \label{lem:unit_equiv} The operator $H^A$ is unitarily
  equivalent to the operator $H^{\boldsymbol{\alpha}}: \widetilde{H^2}(T,\C) \rightarrow
  \tilde{L}^2(T,\C)$, defined as $-\frac{d^2}{dx^2} + q(x)$ on every
  edge, with the same vertex conditions on the vertices of $T$
  inherited from $\Gamma$ (see equation \eqref{eq:vconditions}), and
  the Robin conditions
  \begin{equation} \label{eq:phase_jump_cond}
    \begin{split}
      f(c_j^+) &= e^{i \alpha_j} f(c_j^-), \\
      f'(c_j^+) &= -e^{i \alpha_j} f'(c_j^-),
    \end{split}
 \end{equation}
 at the new vertices, where the phases $\alpha_j$ are determined by
 \begin{equation}  \label{eq:phase_from_A}
   \alpha_j = \int_{c_j^-}^{c_j^+} A(x) dx  \mod 2\pi,
 \end{equation}
 with the integral taken over the unique path on $\Gamma$ that does
 not pass through any other cut points $c_i$.
\end{lemma}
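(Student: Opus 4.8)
The plan is to obtain the equivalence as a composition of two unitaries. The first is the essentially tautological identification of functions on $\Gamma$ with functions on the cut tree $T$: since each $c_j$ is an interior point of an edge, the underlying metric space and its measure are unchanged by the cutting, so $\widetilde{L}^2(\Gamma,\C)$ and $\widetilde{L}^2(T,\C)$ are canonically the same Hilbert space, and --- because lying in $\widetilde{H}^2$ across an interior point of an edge is exactly continuity of the function and of its first derivative there --- the operator $H^A$ becomes, under this identification, the operator on $T$ that acts as $-(\frac{d}{dx}-iA)^2+q$ on every edge, carries the conditions inherited from $\Gamma$ at the old vertices, and imposes at each pair $c_j^\pm$ the matching $f(c_j^+)=f(c_j^-)$ together with equality of the magnetic derivatives at $c_j^+$ and $c_j^-$, up to the minus sign dictated by the ``derivative into the edge'' convention at the two new vertices. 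This reduces the lemma to the assertion that $A$ can be gauged away on $T$, leaving only the phase jumps \eqref{eq:phase_jump_cond}.

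For the gauge step I would fix a base point $x_0\in T$ and set $\phi(x)=\int_{x_0}^x A$, the line integral of the one-form $A$ along the unique path in $T$ from $x_0$ to $x$; because $T$ is a tree this is single valued and continuous, with $d\phi=A$ on each edge. Then $e^{i\phi}$ is unimodular and lies in $\widetilde{H}^1(T,\C)$, so multiplication $U\colon g\mapsto e^{i\phi}g$ is unitary on $\widetilde{L}^2(T,\C)$ and preserves the form domain (edgewise $\widetilde{H}^1$, continuous at vertices), which keeps the argument valid even when $A$ is only $L^2$. The pointwise identity $(\frac{d}{dx}-iA)(e^{i\phi}g)=e^{i\phi}\frac{dg}{dx}$, valid edge by edge since $\phi'=A$, gives at once $H^A(e^{i\phi}g)=e^{i\phi}(-g''+qg)$, so $U^{-1}H^A U$ acts as $-\frac{d^2}{dx^2}+q$ on every edge. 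The same identity disposes of the old vertices: $\phi$ is single valued there, so continuity of $f=e^{i\phi}g$ is continuity of $g$, and since $\frac{df}{dx_e}(v)-iA(v)f(v)=e^{i\phi(v)}\frac{dg}{dx_e}(v)$, dividing the magnetic $\delta$-condition at $v$ by $e^{i\phi(v)}$ returns the condition \eqref{eq:vconditions} for $g$.

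Finally, substituting $f=e^{i\phi}g$ into the two matching conditions at a cut point and factoring out the phases produces, for $g$, relations of the form $g(c_j^+)=e^{i\alpha_j}g(c_j^-)$ and $g'(c_j^+)=-e^{i\alpha_j}g'(c_j^-)$, where the accumulated phase is $\pm(\phi(c_j^+)-\phi(c_j^-))$, i.e.\ the integral of $A$ along the path in $T$ joining $c_j^-$ and $c_j^+$; since each $c_i^\pm$ has degree one in $T$, this path is precisely the one on $\Gamma$ avoiding all other cut points, which is exactly \eqref{eq:phase_from_A}. Composing $U$ with the identification of the first step then yields $U^{-1}H^A U=H^{\boldsymbol{\alpha}}$. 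I expect the only genuine difficulty to be the orientation bookkeeping: checking that the ``derivative into the edge'' convention forces precisely the minus sign in the second line of \eqref{eq:phase_jump_cond}, and that the local coordinates induced near $c_j^\pm$ give the accumulated phase the exact value and sign recorded in \eqref{eq:phase_from_A} rather than its opposite. The analytic content is otherwise routine and the statement is classical; see \cite{BK_graphs,KosSch_cmp03,Rue_prep11}.
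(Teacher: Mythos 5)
Your gauge-transformation argument (identify $L^2(\Gamma)$ with $L^2(T)$, then conjugate by multiplication with $e^{i\int_{x_0}^{x}A}$ along tree paths) is correct and is precisely the standard proof of this ``easy result'', which the paper itself does not prove but only cites from \cite{BK_graphs,KosSch_cmp03,Rue_prep11}. The orientation bookkeeping you flag is genuinely only a matter of convention: carrying out your substitution gives a jump $e^{-i(\phi(c_j^+)-\phi(c_j^-))}$, and whether this is written as $e^{+i\alpha_j}$ or $e^{-i\alpha_j}$ is absorbed into the arbitrary labelling of the two new vertices $c_j^{\pm}$ and the orientation of the path in \eqref{eq:phase_from_A}, so it does not affect the validity of the statement or of your proof.
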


\begin{remark}
  The minus sign in the second equation of \eqref{eq:phase_jump_cond}
  is due to the fact that at the vertices $c_j^-$ and $c_j^+$ of the
  tree $T$, the derivatives are taken into the edges.
\end{remark}

Henceforth, by $H^{\boldsymbol{\alpha}}$ we will also denote the equivalence
class of operators $H^A$ on $\Gamma$ that are unitarily equivalent to
$H^{\boldsymbol{\alpha}}$.  Since we will be solely interested in the
eigenvalues of $H^A$, this constitutes only a slight abuse of
notation.

To use what we know about zeros of eigenfunctions on \emph{trees} we
need local conditions (unlike those in (\ref{eq:phase_jump_cond})) at
the cut vertices $c_j^\pm$.

Starting with the graph $T$, we define a family of operators
$H_{\boldsymbol\gamma}$ where $\boldsymbol{\gamma} = \left(\gamma_1, \ldots,
  \gamma_\beta \right) \in \R^\beta$.  The operator $H_{\boldsymbol{\gamma}}$ acts as
$-\frac{d^2}{dx^2} + q(x)$ on $f\in \widetilde{H^2}(T,\R)$ that satisfy
the conditions
\begin{equation}  \label{eq:cut_cond}
  \begin{split}
    f'(c_j^+) &= \gamma_j f(c_j^+),\\
    f'(c_j^-) &= -\gamma_j f(c_j^-),
  \end{split}
\end{equation}
at the cut points, together with whatever conditions were imposed on
the vertices of the original graph $\Gamma$.

\begin{figure}[t]
  \centering
  \includegraphics{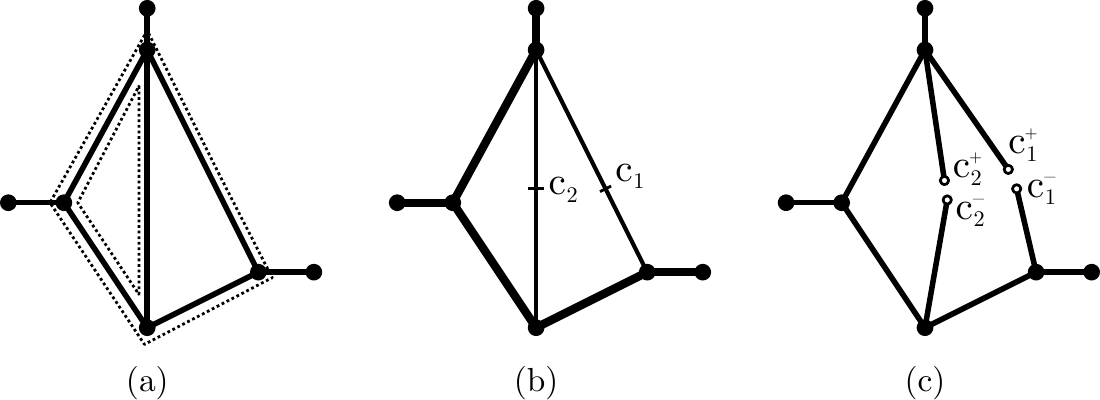}
  \caption{A graph with a choice of generators of the fundamental
    group (a); a choice of the spanning tree (thicker edges) and the
    cut points (b); the cut tree $T$ with new vertices $c_j^\pm$ (c).}
  \label{fig:cutting}
\end{figure}

We consider the $m$-th eigenvalue $\lambda_m(H_{\boldsymbol{\gamma}})$ as a
function of $\boldsymbol{\gamma}$.  We will prove that each eigenfunction
$\psi_n$ of $H^0$ gives rise to a critical point of the function
$\lambda_m(H_{\boldsymbol\gamma})$ (for a suitable $m$) and find the Morse index
of this critical point.  This problem was first considered in
\cite{Banetal_cmp12} to study the partitions of the graph $\Gamma$.
The results of \cite{Banetal_cmp12} contained an \emph{a priori} condition of
non-degeneracy of the critical point which rendered them unsuitable
for the task of proving Theorem~\ref{thm:main_mag}.
Theorem~\ref{thm:main_cut} below removes this extraneous condition and
generalizes the results of \cite{Banetal_cmp12}.  We discuss the
connection to \cite{Banetal_cmp12} in more detail in
section~\ref{sec:partition}.

Let $\lambda_n$ be a simple eigenvalue of $H^0$ and $\psi$ be the
corresponding eigenfunction.  Assume that the function $\psi$ is
non-zero at the vertices of the graph and at the cut points $c_j$
(moving the cut points if necessary).  Since $\psi \in
\widetilde{H}^2(\Gamma,\R)$, it is continuous and has continuous
derivatives.  Considering $\psi$ as a function on $T$, at every cut
point $c^\pm_j$ we have
\begin{equation*}
  \psi(c_j^+) = \psi(c_j^-)
  \qquad\mbox{and}\qquad
  \psi'(c_j^+) = -\psi'(c_j^-).
\end{equation*}

\begin{theorem}\label{thm:main_cut}
  Let $\psi$ be the eigenfunction of $H^0$ that corresponds to a
  simple eigenvalue $\lambda_n(H^0)$.  We assume that $\psi$ is
  non-zero on vertices of the graph.  We denote by $\phi$ the number of
  internal zeros of $\psi$ on $\Gamma$.  Define
  \begin{equation}  \label{eq:gamma_tilde_def}
    \tilde{\gamma}_j := \frac{\psi'(c_j^+)}{\psi(c_j^+)} = -\frac{\psi'(c_j^-)}{\psi(c_j^-)}
  \end{equation}
  and let $\widetilde{\boldsymbol\gamma} =
  (\tilde{\gamma}_1,\ldots,\tilde{\gamma}_\beta)$.
  Consider the eigenvalues of $H_{\boldsymbol{\gamma}}$ as functions
  $\lambda_n(H_{\boldsymbol{\gamma}})$ of $\boldsymbol{\gamma}$.  Then
  \begin{enumerate}
  \item \label{item:same_eig}
    $\lambda_{\phi+1}(H_{\boldsymbol{\gamma}})\Big|_{\boldsymbol{\gamma}=\widetilde{\boldsymbol\gamma}}
    = \lambda_n(H^0)$ where $\phi$ is the number of zeros of $\psi$
    on $\Gamma$,
  \item $\boldsymbol{\gamma} = \widetilde{\boldsymbol\gamma}$ is a non-degenerate
    critical point of the function
    $\lambda_{\phi+1}(H_{\boldsymbol{\gamma}})$, and
  \item the Morse index of the critical point $\boldsymbol{\gamma} = \widetilde{\boldsymbol\gamma}$
    is equal to $n-1+\beta - \phi$.
  \end{enumerate}

\end{theorem}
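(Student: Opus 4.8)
The plan is to reduce the three assertions about $\lambda_{\phi+1}(H_{\boldsymbol\gamma})$ to a single local analysis near $\boldsymbol\gamma = \widetilde{\boldsymbol\gamma}$, built around the function $\psi$. First I would prove (\ref{item:same_eig}): by construction of $\widetilde{\boldsymbol\gamma}$ in \eqref{eq:gamma_tilde_def}, the restriction of $\psi$ to the tree $T$ satisfies all the conditions \eqref{eq:cut_cond} at the cut points $c_j^\pm$, as well as the inherited $\delta$-conditions at the original vertices, so $\psi$ is an eigenfunction of $H_{\widetilde{\boldsymbol\gamma}}$ with eigenvalue $\lambda_n(H^0)$. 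To identify \emph{which} eigenvalue it is, I would count the zeros of $\psi$ on $T$. On $\Gamma$ the function $\psi$ has $\phi$ internal zeros and is nonzero at all vertices and cut points; cutting at the $c_j$ destroys no zeros and creates none, so $\psi$ still has exactly $\phi$ zeros on $T$. Since $T$ is a tree and $\psi$ satisfies $\delta$-type conditions (hence, on the tree, an eigenfunction with $k$ zeros is the $(k+1)$-th one — the non-degenerate Sturm-type oscillation theorem on trees, available from \cite{BK_graphs} and used in \cite{Banetal_cmp12}), $\psi$ is the $(\phi+1)$-th eigenfunction of $H_{\widetilde{\boldsymbol\gamma}}$, giving $\lambda_{\phi+1}(H_{\widetilde{\boldsymbol\gamma}}) = \lambda_n(H^0)$. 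I would also record here that this eigenvalue is simple for $H_{\widetilde{\boldsymbol\gamma}}$ (the simplicity follows because eigenfunctions on a tree are determined by their values and derivatives at one point, or can be inherited from the simplicity of $\lambda_n(H^0)$).

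Next, for parts (2) and (3), I would compute the Hessian of $\lambda_{\phi+1}(H_{\boldsymbol\gamma})$ at $\boldsymbol\gamma = \widetilde{\boldsymbol\gamma}$ via first-order perturbation theory. Since the eigenvalue is simple, $\lambda_{\phi+1}(H_{\boldsymbol\gamma})$ is real-analytic near $\widetilde{\boldsymbol\gamma}$, with a normalized analytic eigenfunction $\psi_{\boldsymbol\gamma}$. Differentiating the eigenvalue equation and the boundary conditions \eqref{eq:cut_cond} in $\gamma_j$ and using the Rellich/Hellmann-Feynman argument, I expect
\[
  \frac{\partial \lambda_{\phi+1}}{\partial \gamma_j}\bigg|_{\widetilde{\boldsymbol\gamma}}
  = \frac{-\big(\psi(c_j^+)^2 + \psi(c_j^-)^2\big)}{\|\psi\|^2} \cdot (\text{sign factor}),
\]
wait—more carefully: the $\gamma_j$-dependence enters only through the Robin parameter at $c_j^\pm$, so the first derivative is $-\big(|\psi(c_j^+)|^2 + |\psi(c_j^-)|^2\big)/\|\psi\|^2$ up to sign and is generically nonzero; but this is \emph{not} what we want, since we need $\widetilde{\boldsymbol\gamma}$ to be a critical point. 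The resolution is that one must restrict to the correct value: actually the first derivative of the \emph{eigenvalue} with respect to the Robin data does not vanish, so the critical-point structure must come from parametrizing differently. Rereading \cite{Banetal_cmp12}, the right move is: the critical point statement is for a \emph{reparametrized} family, or equivalently, one shows $\nabla_{\boldsymbol\gamma}\lambda_{\phi+1}(\widetilde{\boldsymbol\gamma}) = 0$ holds because of how $\widetilde{\gamma}_j$ is tied to $\psi$ — I would instead differentiate the characteristic/secular equation $F(\lambda,\boldsymbol\gamma) = 0$ implicitly, where $F$ is built from the Titchmarsh–Weyl $m$-functions of the tree branches hanging at each $c_j$, and observe that $\partial_{\gamma_j} F$ and $\partial_\lambda F$ combine so that stationarity of $\lambda$ in $\gamma_j$ is equivalent to a Wronskian/matching identity that $\psi$ satisfies precisely because it extends to a global eigenfunction on $\Gamma$.

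The heart of the argument, and the step I expect to be the main obstacle, is computing the Hessian and identifying its signature with $n-1+\beta-\phi$. I would organize it as follows: (a) set up the secular equation $F(\lambda, \boldsymbol\gamma) = 0$ for $H_{\boldsymbol\gamma}$ expressed through the Dirichlet-to-Neumann data of $T$ at the cut points, so that $\boldsymbol\gamma \mapsto \lambda_{\phi+1}(\boldsymbol\gamma)$ is given implicitly; (b) show that the Hessian of $\lambda_{\phi+1}$ at $\widetilde{\boldsymbol\gamma}$ is congruent to (minus) the Hessian of $F$ in $\boldsymbol\gamma$ at $(\lambda_n(H^0),\widetilde{\boldsymbol\gamma})$ divided by $\partial_\lambda F$, whose sign is controlled by the index of $\lambda_n(H^0)$ in the tree spectrum (here the count "$\phi+1$" and the comparison with the position "$n$" in the spectrum of $H^0$ on $\Gamma$ enter, contributing the $n-1-\phi$ part); (c) relate the $\boldsymbol\gamma$-Hessian of $F$ to a $\beta\times\beta$ matrix $M$ whose entries are second derivatives/mixed Wronskians of the eigenfunction across the cuts, and show $M$ is nondegenerate with signature shifted by $\beta$ relative to the naive count — the $+\beta$ term. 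Nondegeneracy throughout (hence the removal of the \emph{a priori} hypothesis of \cite{Banetal_cmp12}) I would get from the simplicity of $\lambda_n(H^0)$: if the Hessian were degenerate, one could construct, from a null vector, a nonzero element of the kernel of $H^0 - \lambda_n$ on $\Gamma$ linearly independent from $\psi$ — contradiction. Finally, Theorem~\ref{thm:main_mag} will follow by combining Theorem~\ref{thm:main_cut} with Lemma~\ref{lem:unit_equiv} and an analytic interpolation between the real parameter $\boldsymbol\gamma$ and the magnetic phases $\boldsymbol\alpha$ through an operator with imaginary magnetic field, which flips the sign of the relevant quadratic form and turns the index $n-1+\beta-\phi$ (over a $\beta$-dimensional $\boldsymbol\gamma$-space) into the complementary index $\phi-(n-1)$ (over the $\beta$-dimensional $\boldsymbol\alpha$-space).
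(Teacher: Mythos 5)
Your Part (1) is essentially the paper's argument (Courant-sharpness on trees plus simplicity for an eigenfunction nonvanishing at vertices), but the rest of the proposal contains a genuine gap that begins with a sign error. The $\gamma_j$-dependence of the quadratic form of $H_{\boldsymbol\gamma}$ is $\gamma_j\left(f^2(c_j^+) - f^2(c_j^-)\right)$ (see \eqref{eq:qfhgamma}), a \emph{difference} of squares, because the two conditions in \eqref{eq:cut_cond} carry opposite signs of $\gamma_j$ at $c_j^+$ and $c_j^-$. Hence the Hellmann--Feynman derivative of the simple eigenvalue is $\psi^2(c_j^+)-\psi^2(c_j^-)$, which vanishes precisely because $\psi$ is continuous across the cut: the ``naive'' computation you abandoned is in fact the correct and complete proof of criticality. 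Your claimed derivative $-\left(\psi(c_j^+)^2+\psi(c_j^-)^2\right)/\|\psi\|^2$ is wrong, and the ensuing detour through a secular equation built from Titchmarsh--Weyl $m$-functions is both unnecessary for Part (2) and never actually carried out.

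More seriously, the heart of the theorem --- the Morse index $n-1+\beta-\phi$ and the non-degeneracy --- is not proved in your proposal: steps (b) and (c) are statements of what you hope the secular-equation Hessian will do (``sign controlled by the position $n$'', ``signature shifted by $\beta$''), with no argument, and the non-degeneracy claim via a null vector producing a second eigenfunction of $H^0-\lambda_n$ is asserted, not established. The paper instead works with the Lagrange functional $F_3(\lambda,f,\boldsymbol\gamma)=h_{\boldsymbol\gamma}[f]-\lambda(\|f\|^2-1)$: Lemma~\ref{lem:index} gives index $\phi+1$ in the $(\lambda,f)$ directions for each fixed $\boldsymbol\gamma$; a decomposition $f=f_0+\sum_j s_j\rho_{j,\lambda}$ into a continuous part and explicit jump solutions, together with the Wronskian identity of Lemma~\ref{lem:wronskian_cutpoints} (which shows $R_j(\xi)=\tilde\gamma_j$), exhibits a critical manifold on which the $(s_j,\gamma_j)$-Hessian is block diagonal with $\beta$ blocks of negative determinant, giving total index $n+\beta$ for $F_3$ and non-degeneracy for free; Lemma~\ref{T:reduction} then subtracts the fibre index $\phi+1$ to yield $n+\beta-(\phi+1)$. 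If you wish to salvage your route, you would have to make the implicit-function-theorem reduction of the eigenvalue to a secular equation rigorous, compute $\partial_\lambda F$ and the $\boldsymbol\gamma$-Hessian of $F$ explicitly, and prove (not assume) that the resulting $\beta\times\beta$ matrix has the claimed signature; none of that is in the proposal.
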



\section{Proof of Theorem \ref{thm:main_cut}}
\label{sec:proof_main_cut}

Before we prove Theorem \ref{thm:main_cut}, we
collect some preliminary results in subsections \ref{sec:qform}
through \ref{sec:reduction}.


\subsection{Quadratic form of $H_{\boldsymbol{\gamma}}$}\label{sec:qform}

The quadratic form of the operator $H^0$ on the graph $\Gamma$ is
 \begin{equation}\label{eq:qfh}
   h[f] = \sum_{e \in E(\Gamma)} \int_e \left(f'(x)^2 + q(x)f^2(x)\right)\,
   dx
   + \sum_{v \in V(\Gamma)} \chi_vf^2(v)
 \end{equation}
with the domain
\begin{equation*}
  D = \{f \in \widetilde{H}^1(\Gamma,\R): f \mbox{ is
    continuous at all vertices of $\Gamma$}\}.
\end{equation*}
The Dirichlet conditions, if any, are also imposed on the domain $D$.
For $\boldsymbol{\gamma} \in \R^\beta$, the quadratic form of the
operator $H_{\boldsymbol{\gamma}}$ acting on the tree $T$ is formally
 \begin{equation}\label{eq:qfhgamma}
   h_{\boldsymbol{\gamma}}[f] = h[f] + \sum_{j = 1}^\beta \gamma_j \left(f^2(c_j^+) - f^2(c_j^-)\right)
 \end{equation}
with the domain
\begin{equation*}
  D_{\boldsymbol\gamma} = \{f \in \widetilde{H}^1(T,\R): f \mbox{ is
    continuous at all vertices of the tree $T$}\}.
\end{equation*}
Importantly, the domain $D_{\boldsymbol\gamma}$ is larger than $D$ since we no longer impose
continuity at the cut points $c_j$ on the functions from $D_{\boldsymbol\gamma}$.
More precisely, we can represent the domain $D$ as
\begin{equation*}
  D = \{f \in D_{\boldsymbol\gamma}: f(c_j^+) = f(c_j^-)\}.
\end{equation*}
Note that the domain $D_{\boldsymbol\gamma}$ is independent of the actual value of
$\boldsymbol{\gamma}$.

\begin{remark}\label{rem:functiondecomp}
  Observe that any function $f \in D_{\boldsymbol\gamma}$ can be written as
  \begin{equation*}
    f = f_0 + \sum_{j=1}^\beta s_j\rho_j
  \end{equation*}
  where $f_0 \in D$, $s_j \in \R$ and $\rho_j \in
  D_{\boldsymbol\gamma}$.  Moreover, we require that $\rho_j$ have a
  jump at $c_j$, but be continuous at all other cut points $c_k$,
  $k\neq j$ (i.e. each $\rho_j$ represents one jump of the function $f$).
  In particular, for a given $\lambda$, we will use the family of
  functions $\rho_{j,\lambda}$ that satisfy
  \begin{equation*}
    -\rho_{j,\lambda}''(x) + (q(x) - \lambda)\rho_{j,\lambda}(x)= 0
  \end{equation*}
  on every edge, the $\delta$-type conditions (\ref{eq:vconditions})
  at the vertices of $\Gamma$, and the following conditions at the cut points:
 \begin{align}
   &\rho_{j,\lambda}(c_j^-) = 0,&
   &\mbox{and}& \qquad
   &\rho_{j,\lambda}(c_j^+) = 1 \\
   \label{eq:no_cut}
   &\rho_{j,\lambda}(c_k^+) = \rho_{j,\lambda}(c_k^-),&
   &\mbox{and}& \qquad
   &\rho_{j,\lambda}'(c_k^+) = -\rho_{j,\lambda}'(c_k^-)
   \hspace{.2in} \forall\hspace{.05in} k \neq j.
  \end{align}
  Note that condition (\ref{eq:no_cut}) essentially glues these cut points back together.

  Existence and uniqueness of the functions satisfying the above
  conditions is assured (see, for example \cite{BK_graphs}, section
  3.5.2) provided $\lambda$ stays away from the Dirichlet spectrum
  $\rho_{j,\lambda}(c_j^+) = 0$.  Since we are interested in $\lambda$
  close to the eignevalue $\lambda_n(H^0)$ of the uncut graph,
  we check that $\lambda_n(H^0)$ does not belong to the Dirichlet
  spectrum described above.  The corresponding ``Dirichlet graph'' can
  be viewed as the uncut graph with an extra Dirichlet condition
  imposed at the vertex $c_j$ (in place of the Neumann condition
  effectively imposed there by $H^0$).  However, by a simple extension
  of the interlacing theorem of \cite{BK_graphs,BerKuc_incol12} (see
  Lemma~\ref{lem:inequality} below for a precise formulation), one can
  see that the interlacing between Neumann and Dirichlet eigenvaules
  is strict since $\lambda_n(H^0)$ is assumed to be simple and the
  corresponding eigenfunction $\psi$ is non-zero at the cut points.
\end{remark}


\subsection{Properties of Wronskian on graphs}

It will be important to relate the values of the derivatives of the
functions $\rho_{j,\lambda}$ at the cut points $c_j^\pm$.  We will do
this using the Wronskian.  Therefore, in this subsection, we
investigate the properties of the Wronskian on graphs.  We will do this
for the most general self-adjoint vertex conditions on the graph $\Gamma$.

Given any two functions $f_1, f_2 \in D$ that satisfy the differential
equation $-f''(x) + (q(x) - \lambda)f(x) = 0$, we know by Abel's
formula that the Wronskian of $f_1$ and $f_2$ is constant on any
interval, or in particular, on any edge.  Observe that the
Wronskian is a one-form, that is its sign depends on direction.
We will now show that the total sum of Wronskians at any vertex with
self-adjoint conditions is zero (all Wronskians must be taken in the
outgoing direction).

\begin{lemma}\label{sum}
  Let $\Gamma$ be a graph and let $f_1, f_2 \in
  \widetilde{H}^2(\Gamma,\R)$ be two functions that satisfy
  the differential equation $-f''(x) + (q(x) - \lambda)f(x) = 0$
  and real self-adjoint vertex conditions.  Then
  \begin{equation*}
    \sum_{e \in E_v} W_e(f_1,f_2)(v) = 0,
  \end{equation*}
  where $E_v$ denotes the set of all edges attached to the vertex $v$
  and each Wronskian $W_e(f_1,f_2)$ is taken outward.
\end{lemma}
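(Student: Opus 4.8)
The plan is to recognize the sum of outgoing Wronskians at a vertex as the "boundary form'' whose vanishing is exactly the defining property of self-adjoint vertex conditions. Fix a vertex $v$ of degree $d = |E_v|$ and, for $k = 1,2$, collect the boundary data into vectors $F_k = \big(f_k|_e(v)\big)_{e \in E_v} \in \R^{d}$ and $F_k' = \big(\tfrac{df_k}{dx_e}(v)\big)_{e \in E_v} \in \R^{d}$, where all derivatives are taken into the edges (the ``outward'' convention of the lemma). Unwinding the definition $W_e(f_1,f_2)(v) = f_1|_e(v)\,\tfrac{df_2}{dx_e}(v) - \tfrac{df_1}{dx_e}(v)\,f_2|_e(v)$ gives at once
\[
  \sum_{e \in E_v} W_e(f_1, f_2)(v) \;=\; F_1^{\mathsf T} F_2' - (F_1')^{\mathsf T} F_2 \;=:\; \Omega\big((F_1, F_1'),\,(F_2, F_2')\big),
\]
that is, the standard symplectic form on $\R^{2d} = \R^{d}\oplus\R^{d}$ evaluated on the two pairs of boundary data. (Abel's formula, already noted in the text, guarantees each $W_e$ is constant on $e$, so these boundary values are well defined.)

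Next I would invoke the fact that real self-adjoint vertex conditions at $v$ are precisely those of the form $A_v F + B_v F' = 0$, where $(A_v \mid B_v)$ is a real $d\times 2d$ matrix of maximal rank $d$ with $A_v B_v^{\mathsf T}$ symmetric; their solution set $\mathcal L_v \subset \R^{2d}$ is then a $d$-dimensional subspace on which $\Omega$ vanishes (a Lagrangian subspace). The vanishing of $\Omega|_{\mathcal L_v \times \mathcal L_v}$ is a short linear-algebra consequence of the symmetry of $A_v B_v^{\mathsf T}$: when $B_v$ is invertible one writes $F' = -B_v^{-1} A_v F$ on $\mathcal L_v$, so that $\Omega\big((F_1,F_1'),(F_2,F_2')\big) = -F_1^{\mathsf T} B_v^{-1} A_v F_2 + F_1^{\mathsf T} (B_v^{-1} A_v)^{\mathsf T} F_2 = 0$; the general case follows by choosing a suitable basis (or by a continuity/limiting argument), and is in any event the classical statement that these are exactly the conditions making $H^0$ symmetric (see \cite{BK_graphs}). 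Since $f_1, f_2$ satisfy the vertex conditions at $v$ by hypothesis, $(F_k, F_k') \in \mathcal L_v$, and therefore $\sum_{e \in E_v} W_e(f_1,f_2)(v) = 0$. As $v$ is arbitrary, the lemma follows.

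For the $\delta$-type conditions \eqref{eq:vconditions} used elsewhere in the paper this collapses to a one-line check: continuity lets one factor the common value $f_k(v)$ out of the sum, and $\sum_{e\in E_v} \tfrac{df_k}{dx_e}(v) = \chi_v f_k(v)$ then yields $\sum_{e\in E_v} W_e(f_1,f_2)(v) = f_1(v)\chi_v f_2(v) - \chi_v f_1(v) f_2(v) = 0$. The only genuine content is thus the linear-algebra identification in the second paragraph, and I expect phrasing the fully general self-adjoint case cleanly — without drowning in matrix bookkeeping — to be the main, though minor, obstacle; the reduction to the symplectic form is purely bookkeeping.
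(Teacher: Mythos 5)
Your argument is correct, but it is not the route the paper takes. The paper's proof never invokes the Kostrykin--Schrader-type parametrization of vertex conditions: it multiplies $f_1,f_2$ by a cutoff $\zeta$ equal to $1$ near $v$ and vanishing near all other vertices, notes that $g_j=\zeta f_j$ still lie in the operator domain, and computes $0=\langle Hg_1,g_2\rangle-\langle g_1,Hg_2\rangle$ by integration by parts; the only surviving boundary terms are exactly the outgoing Wronskians at $v$. That argument buys an appealing abstractness --- ``self-adjoint conditions'' enter only through the symmetry of $H$, with no matrix bookkeeping at all --- at the cost of a small analytic device (the cutoff). Your approach instead identifies $\sum_{e\in E_v}W_e(f_1,f_2)(v)$ with the boundary symplectic form and reduces the lemma to the isotropy of the vertex subspace $\mathcal L_v=\ker(A_v\,|\,B_v)$, which is purely linear-algebraic and makes the mechanism transparent (the lemma \emph{is} the statement that $\mathcal L_v$ is Lagrangian); it also immediately yields your one-line check for the $\delta$-type conditions actually used in the paper. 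The only soft spot in your write-up is the case of non-invertible $B_v$, which you dispatch with ``a suitable basis or a limiting argument''; this can be closed in one line: since $A_vB_v^{\mathsf T}$ is symmetric and $(A_v\,|\,B_v)$ has rank $d$, the map $x\mapsto(B_v^{\mathsf T}x,-A_v^{\mathsf T}x)$ has image of dimension $d$ contained in $\mathcal L_v$, hence equal to it, and for two such elements
\begin{equation*}
  (B_v^{\mathsf T}x)^{\mathsf T}(-A_v^{\mathsf T}y)-(-A_v^{\mathsf T}x)^{\mathsf T}(B_v^{\mathsf T}y)
  = x^{\mathsf T}\bigl(A_vB_v^{\mathsf T}-B_vA_v^{\mathsf T}\bigr)y=0 ,
\end{equation*}
so isotropy holds with no invertibility assumption. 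With that patch, your proof is complete and fully general; both proofs use only the condition at the single vertex $v$, so neither is more local than the other.
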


\begin{proof}
  We denote the self-adjoint operator acting as $-\frac{d^2}{dx^2} + q(x)$ by $H$.
  Define a smooth compactly supported function $\zeta$ on $\Gamma$ such that
  $\zeta \equiv 1$ in a neighborhood of the vertex $v$ and is zero at all other
  vertices of $\Gamma$.  For the sake of convenience, we denote $\zeta f_j$ by $g_j$.
  Then using the self-adjointness of $H$ and integrating by parts, we obtain
  \begin{align*}
    0 &= \langle Hg_1,g_2\rangle - \langle g_1,Hg_2\rangle \\
    &= \sum_{e \in E}  \left.\left(g_1(x)g_2'(x) -
        g_1'(x)g_2(x)\right)\right|^{L_e}_0
    + \int_e \left(g_1'(x)g_2'(x) - g_2'(x)g_1'(x)\right)\,dx\\
    &= \sum_{e \in E_v} W_e(g_1,g_2)(v) = \sum_{e \in E_v} W_e(f_1,f_2)(v)
  \end{align*}
  since $g_j = f_j$ near vertex $v$ and $g_j$ are zero near all other
  vertices.
\end{proof}

\begin{lemma}\label{lem:wronskian_cutpoints}
  Let $a$ and $b$ be two leaves (i.e., vertices of degree one) of a graph $\Gamma$.  Let $f_1$ and
  $f_2$ be two solutions of $-f''(x) +(q(x) - \lambda)f(x) = 0$ on
  $\Gamma$ that satisfy the same self-adjoint vertex conditions at all
  vertices except $a$ and $b$.  Then $W(f_1,f_2)(a) = -
  W(f_1,f_2)(b)$.
\end{lemma}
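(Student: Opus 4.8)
The plan is to reduce this to Lemma~\ref{sum} by a summation-over-vertices argument. Since $f_1$ and $f_2$ both satisfy $-f'' + (q-\lambda)f = 0$ on every edge, Abel's formula tells us that the Wronskian $W_e(f_1,f_2)$ is constant along each edge $e$; in particular, for an edge $e$ with endpoints $u$ and $w$, the outgoing Wronskian at $u$ equals the negative of the outgoing Wronskian at $w$ (since the Wronskian is a one-form whose sign flips with orientation). Now I would sum the quantity $\sum_{e \in E_v} W_e(f_1,f_2)(v)$ over \emph{all} vertices $v \in V(\Gamma)$. On the one hand, every edge contributes exactly twice to this double sum — once at each of its two endpoints, with opposite signs — so the grand total is $0$. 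On the other hand, by Lemma~\ref{sum}, the inner sum vanishes at every vertex where the shared self-adjoint vertex conditions hold, i.e.\ at every vertex except $a$ and $b$. Hence the grand total equals $\big(\sum_{e \in E_a} W_e(f_1,f_2)(a)\big) + \big(\sum_{e \in E_b} W_e(f_1,f_2)(b)\big)$. Since $a$ and $b$ are leaves, each of these inner sums has a single term, namely $W(f_1,f_2)(a)$ and $W(f_1,f_2)(b)$ respectively. Setting the grand total to zero gives $W(f_1,f_2)(a) = -W(f_1,f_2)(b)$, as claimed.

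A small technical point worth addressing is that Lemma~\ref{sum} as stated requires $f_1, f_2$ to satisfy the self-adjoint vertex conditions at \emph{every} vertex, whereas here they fail to satisfy them (or rather, no condition is imposed) at $a$ and $b$. But the proof of Lemma~\ref{sum} is local: it uses a cutoff $\zeta$ supported near the single vertex $v$ under consideration, so only the vertex conditions at $v$ itself enter. Thus the conclusion $\sum_{e \in E_v} W_e(f_1,f_2)(v) = 0$ holds at each vertex $v \notin \{a,b\}$ without any hypothesis on the behavior of $f_1, f_2$ near $a$ or $b$. This is really the only place any care is needed; the rest is bookkeeping with signs.

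The main obstacle, such as it is, is simply getting the orientation conventions consistent — making sure that "Wronskian taken outward" at the two endpoints of an edge genuinely produces a cancelling pair when we sum over all vertices, and that leaves contribute a single unambiguous term. Once the sign conventions are pinned down, the argument is a one-line telescoping: the edge-sum is zero by cancellation, the vertex-sum is zero away from $a,b$ by Lemma~\ref{sum}, and equating the two isolates the two leaf contributions with opposite signs.
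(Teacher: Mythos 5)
Your argument is correct. It rests on the same two pillars as the paper's proof -- Abel's formula (the Wronskian is constant along each edge and flips sign with orientation) and the vertex conservation statement of Lemma~\ref{sum} -- but you finish the global step differently: the paper reinterprets the Wronskian as a flow between $a$ and $b$ (after choosing edge orientations that make it non-negative) and cites the standard graph-theoretic fact that the flow out of $a$ equals the flow into $b$, whereas you prove that fact directly by summing $\sum_{e\in E_v} W_e(f_1,f_2)(v)$ over all vertices and observing that each edge contributes a cancelling pair, so the grand total is zero and only the two leaf terms survive. Your telescoping version is more self-contained: it avoids the citation, avoids the cosmetic step of orienting edges to make the ``flow'' non-negative, and works verbatim even when the Wronskian vanishes on some edges. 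You also correctly flag, and resolve, the one point the paper glosses over: Lemma~\ref{sum} is stated for functions satisfying the vertex conditions everywhere, but its proof is local (the cutoff $\zeta$ makes $g_j$ vanish near all other vertices, and the zero function satisfies the homogeneous conditions there), so its conclusion applies at each vertex $v\notin\{a,b\}$ even though $f_1,f_2$ satisfy no prescribed condition at $a$ and $b$. This is exactly the justification needed, and with it your proof is complete.
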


\begin{proof}
  In graph theory, a \emph{flow} $\eta$ between two vertices $a$ and
  $b$ is defined as a non-negative function on the edges of a directed
  graph $\Gamma$ that satisfies Kirchhoff's current conservation
  condition at every vertex other than $a$ or $b$: the total current
  flowing into a vertex must equal the total current flowing out of it
  (see Figure \ref{fig:flow} for an example).  Given a flow $\eta$
  between $a$ and $b$, it is a standard result of graph theory that
  the total current flowing into $b$ is equal to the total current
  flowing out of $a$ \cite{Bol_GTM98}.

  We interpret the Wronskian as a flow by assigning directions to the
  edges of $\Gamma$ so that the Wronskian is always positive.  The
  current conservation condition is then equivalent to Lemma
  \ref{sum}.  Therefore, the flow into $b$ equals the flow out of $a$
  so $W(f_1,f_2)(a) = - W(f_1,f_2)(b)$.
\end{proof}

\begin{figure}
  \centering
  \includegraphics{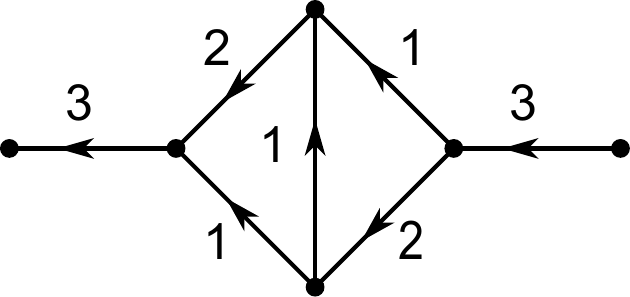}
  \caption{Let $\Gamma$ be the directed graph above and let $f$ be the function whose (constant) value on every edge is given by the number near each arrow.  One can easily check that the function $f$ constitutes a flow on $\Gamma$.}
  \label{fig:flow}
\end{figure}

\subsection{Morse index with Lagrange multipliers}
\label{sec:MorseLagrange}

In the proof of Theorem \ref{thm:main_cut}, we will need to find the
Morse index of the $n$-th eigenpair $(\lambda_n,\psi)$ of $H^0$.
The lemma below will help us to do just that.

\begin{lemma}
  \label{lem:index}
  Let $A$ be a bounded from below self-adjoint operator acting on a
  real Hilbert space $\cH$.  Assume that $A$ has only discrete
  spectrum below a certain $\Lambda$ and its eigenvalues are ordered
  in increasing order.  Let $h[f]$ be the quadratic form
  corresponding to $A$.  If the $n$-th eigenvalue $\lambda_n <
  \Lambda$ is simple and $\psi$ is the corresponding eigenfunction,
  then the \emph{Lagrange functional}
  \begin{equation}
    \label{eq:Lagrange_functional}
    L(\lambda,f) = h[f] - \lambda( \|f\|^2 - 1 )
  \end{equation}
  has a non-degenerate critical point at $(\lambda_n, \psi)$ whose Morse index is $n$.
\end{lemma}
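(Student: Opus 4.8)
The plan is to study the Lagrange functional $L(\lambda, f)$ directly as a function on $\R \times \cH$ and compute its Hessian at the point $(\lambda_n, \psi)$. First I would check that $(\lambda_n, \psi)$ is indeed a critical point: the partial derivative in $\lambda$ is $-( \|f\|^2 - 1 )$, which vanishes at $f = \psi$ since $\|\psi\| = 1$; the variation in $f$ gives $2(h[\psi, \cdot] - \lambda_n \langle \psi, \cdot\rangle)$, where $h[\cdot,\cdot]$ is the bilinear form polarizing $h$, and this vanishes because $\psi$ is an eigenfunction of $A$ with eigenvalue $\lambda_n$ (here I would note that $\psi$ lies in the form domain and in fact in the operator domain, so $h[\psi, g] = \langle A\psi, g\rangle = \lambda_n \langle \psi, g\rangle$ for all $g$ in the form domain). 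So $(\lambda_n, \psi)$ is a critical point.

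Next I would write the second variation. Perturbing $\lambda \mapsto \lambda_n + \mu$ and $f \mapsto \psi + g$, the quadratic part of $L$ is
\begin{equation*}
  Q(\mu, g) = h[g] - \lambda_n \|g\|^2 - 2\mu \langle \psi, g \rangle .
\end{equation*}
The main step is to analyze the signature of this quadratic form on $\R \times \cH$. I would decompose $g = t\psi + g^\perp$ with $g^\perp \perp \psi$. Then $\langle \psi, g\rangle = t$, and since $\psi$ is an eigenfunction, $h[\psi, g^\perp] = \lambda_n \langle \psi, g^\perp\rangle = 0$, so the cross term between $t\psi$ and $g^\perp$ in $h[g] - \lambda_n\|g\|^2$ drops out and we get
\begin{equation*}
  Q(\mu, t, g^\perp) = \big( h[g^\perp] - \lambda_n \|g^\perp\|^2 \big) - 2\mu t .
\end{equation*}
Now the first term, the restriction of $h - \lambda_n$ to $\psi^\perp$, is a quadratic form whose spectrum (via the min-max characterization, using that $\lambda_n$ is simple and the spectrum below $\Lambda$ is discrete) is positive on the span of the eigenfunctions with eigenvalues $> \lambda_n$ and negative on the span of those with eigenvalues $< \lambda_n$; there are exactly $n-1$ of the latter, giving $n-1$ negative directions, and no kernel on $\psi^\perp$ by simplicity. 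The term $-2\mu t$ is a hyperbolic form on the remaining two-dimensional space $\mathrm{span}(\mu, t)$, contributing exactly one more negative eigenvalue (and one positive), with no kernel. Hence $Q$ is non-degenerate with Morse index $(n-1) + 1 = n$, which is the claim.

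The step I expect to require the most care is the rigorous handling of the quadratic form $h[g^\perp] - \lambda_n\|g^\perp\|^2$ on the infinite-dimensional space $\psi^\perp$: one must make sure ``Morse index'' is interpreted as the maximal dimension of a subspace on which the form is negative definite, verify this equals $n-1$ via min-max, and confirm non-degeneracy (no null directions) which uses both simplicity of $\lambda_n$ and the spectral gap provided by discreteness below $\Lambda$. I would also be careful that $Q$ really is the Hessian in an appropriate sense on $\R \times \cH$ — since $h$ may be unbounded, ``non-degenerate critical point with Morse index $n$'' should be read in the form sense, and I would state this interpretation explicitly at the start of the proof. The decoupling of the $\mu t$ block from the $g^\perp$ block, which is what makes the index simply additive, is the clean structural observation that drives the whole argument.
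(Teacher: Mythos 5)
Your proposal is correct and takes essentially the same route as the paper's proof: the same spectral splitting of $\cH$ into the span of the first $n-1$ eigenfunctions, the span of $\psi$, and their orthogonal complement, with the same quadratic expansion of $L$ and the same observation that the coupling between $\delta\lambda$ and the $\psi$-component is a hyperbolic block contributing exactly one extra negative direction, yielding index $(n-1)+1=n$. The only cosmetic difference is that you diagonalize the $-2\mu t$ block directly where the paper completes squares.
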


\begin{proof}
  We split the Hilbert space $\cH$ into the orthogonal sum $\cH_{-}
  \oplus \cH_0 \oplus \cH_{+}$.  Here the space $\cH_{-}$ is the span
  of the first $n-1$ eigenfunctions of $A$, the space $\cH_0$ is the
  span of the $n$-th eigenfunction $\psi$, and $\cH_{+}$ is their
  orthogonal complement.  The quadratic form $h$ is reduced by the
  decomposition $\cH = \cH_{-} \oplus \cH_0 \oplus \cH_{+}$, namely,
  \begin{equation*}
    h\left[f_{-} + f_0 + f_{+}\right]
    =  h[f_{-}] + h[f_0] + h[f_{+}].
  \end{equation*}
  On $\cH_{-}$, the quadratic form $h$ is bounded from above,
  \begin{equation*}
    h[f_{-}] \leq \lambda_{n-1} \|f_{-}\|^2 < \lambda_{n} \|f_{-}\|^2.
  \end{equation*}
  Similarly, on $\cH_{+}$ the form $h$ is bounded from below,
  \begin{equation*}
    h[f_{+}] > \lambda_{n} \|f_{+}\|^2.
  \end{equation*}
  Finally, on $\cH_0$ we have
  \begin{equation*}
    h[f_0] = \lambda_n s^2,
  \end{equation*}
  where $f_0 = s\psi$, $s\in\R$.

  To show that $(\lambda_n, \psi)$ is a critical point and calculate
  its index we evaluate
  \begin{equation*}
    \delta L := L(\lambda_n + \delta \lambda, \psi + \delta f) - L(\lambda_n, \psi)
    = L(\lambda_n + \delta \lambda, \psi + \delta f) - \lambda_n.
  \end{equation*}
  Expanding $\delta f = \delta f_{-} + s\psi + \delta f_{+}$ according
  to our decomposition of $\cH$, we see that
  \begin{align*}
    \delta L
    = h\left[\delta f_{-}\right] &+ \lambda_n (1+s)^2 +
    h\left[\delta f_{+}\right] \\
    &- (\lambda_n + \delta \lambda) \left( \|\delta f_{-}\|^2  + (1+s)^2
      + \|\delta f_{+}\|^2 - 1\right) - \lambda_n.
  \end{align*}
  Simplifying and completing squares, we obtain
  \begin{align*}
    \delta L
    = \left(h\left[\delta f_{-}\right] - \lambda_n \|\delta f_{-}\|^2 \right)
    &- \frac12 (s+\delta\lambda)^2 
    + \frac12 (s-\delta\lambda)^2 \\
    &+ \left(h\left[\delta f_{+}\right] - \lambda_n \|\delta f_{+}\|^2 \right)
    + \mbox{ higher order terms},
  \end{align*}
  where the two middle terms are representing $2s\delta\lambda$.  We
  observe that all terms are quadratic or higher order and hence
  $(\lambda_n,\psi)$ is a critical point as claimed.  The first two
  terms represent the negative part of the Hessian.  Their dimension
  is the dimension of $\cH_{-}$ plus one.  Thus the Morse index is
  $(n-1)+1 = n$.
\end{proof}

We remark that in the finite-dimensional case the Hessian of $\delta
L$ at the critical point is known as the ``bordered Hessian'' (see
\cite{Spr_amm85} for a brief history of the term).

\subsection{Restriction to a critical manifold}\label{sec:reduction}

We will also use the following simple result from \cite{BerKucSmi_gafa12}.

\begin{lemma}
  \label{T:reduction}
  Let $X=Y \bigoplus Y'$ be a direct decomposition of a Banach space.  Let also
  $f:X\to \R$ be a smooth functional such that $(0,0)\in X$ is its
  critical point of Morse index $\ind(f)$.

  If for any $y$ in a neighborhood of zero in $Y$, the point $(y,0)$
  is a critical point of $f$ over the affine subspace $\{y\}\times
  Y'$, then the Hessian of $f$ at the origin, as a quadratic form in
  $X$, is reduced by the decomposition $X=Y\bigoplus Y'$.

  In particular,
  \begin{equation}
    \label{eq:indices_add}
    \ind(f) = \ind(f|_{Y}) + \ind(f|_{Y'}),
  \end{equation}
  where $\ind(f|_W)$ is the Morse index of $0$ as the critical point
  of the function $f$ restricted to the space $W$.  Moreover, if
  $(0,0)$ is a non-degenerate critical point of $f$ on $X$, then $0$
  is non-degenerate as a critical point of $f|_{Y}$.
\end{lemma}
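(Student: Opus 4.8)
The plan is to reduce the whole statement to a single differentiation of the hypothesis, followed by an elementary fact about block-diagonal quadratic forms. I would write points of $X$ as pairs $(u,v)\in Y\times Y'$ and denote by $\partial_u f$ and $\partial_v f$ the partial Fréchet derivatives. The assumption that $(y,0)$ is a critical point of $f$ over the slice $\{y\}\times Y'$ says precisely that $\partial_v f(u,0)=0$ in $(Y')^*$ for every $u$ in a neighborhood of $0$ in $Y$.

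The first step establishes the reduction of the Hessian. Since $f$ is smooth and $(0,0)$ is a critical point, the Hessian $Q:=D^2f(0,0)$ is a symmetric bilinear form on $X=Y\oplus Y'$ with three blocks: $A:=\partial_u^2 f(0,0)$ on $Y$, $C:=\partial_v^2 f(0,0)$ on $Y'$, and the mixed block $B:=\partial_u\partial_v f(0,0)$. By the hypothesis the map $u\mapsto \partial_v f(u,0)$ is identically zero near $u=0$, so its derivative at $u=0$ vanishes; but this derivative is exactly the mixed block $B$. Hence $B=0$, the Hessian is block diagonal, and $A$, $C$ are the Hessians of the restrictions $f|_Y$ and $f|_{Y'}$ at $0$ (here I use that $Y$ and $Y'$ both pass through the critical point, so $0$ is a critical point of each restriction, with Hessian $A$, respectively $C$).

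Next I would prove \eqref{eq:indices_add} as a purely algebraic statement about the block-diagonal form $Q=A\oplus C$, with $\ind$ understood as the maximal dimension of a subspace on which a form is negative definite. The bound $\ind(Q)\ge \ind(A)+\ind(C)$ is immediate, since the direct sum of a negative-definite subspace for $A$ and one for $C$ is negative definite for $Q$; as $\ind(Q)=\ind(f)$ is finite, both $\ind(A)$ and $\ind(C)$ are finite. For the reverse inequality I would take a maximal negative-definite subspace $N\subseteq Y$ for $A$ and pass to its $A$-orthogonal complement $Z_1=\{u:A(u,N)=0\}$, which has codimension at most $\ind(A)$ in $Y$; maximality of $N$ forces $A\ge 0$ on $Z_1$, for otherwise adjoining an $A$-negative vector to $N$ would enlarge it. Doing the same for $C$ yields $Z_2\subseteq Y'$ with $C\ge 0$ of codimension at most $\ind(C)$. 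Then $Q\ge 0$ on $Z_1\oplus Z_2$, whose codimension in $X$ is at most $\ind(A)+\ind(C)$; any negative-definite subspace of $Q$ meets $Z_1\oplus Z_2$ trivially and hence injects into the finite-dimensional quotient $X/(Z_1\oplus Z_2)$, giving $\ind(Q)\le \ind(A)+\ind(C)$ and thus equality.

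Finally, for non-degeneracy I would observe that block-diagonality makes the radical of $Q$ equal to $\operatorname{rad}(A)\oplus\operatorname{rad}(C)$: a vector annihilated by $Q$ is annihilated in each block separately, by pairing against $Y$ and $Y'$ in turn. If $(0,0)$ is non-degenerate for $f$, then $Q$ has trivial radical, whence so does $A$, i.e. $0$ is a non-degenerate critical point of $f|_Y$. The only genuinely delicate point in the whole argument is the reverse index inequality in infinite dimensions, where one cannot simply diagonalize the form; the $A$-orthogonal complement trick is what carries it through, once finiteness of $\ind(A)$ and $\ind(C)$ has been secured from the easy lower bound.
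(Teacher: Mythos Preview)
The paper does not actually prove this lemma; it is quoted from \cite{BerKucSmi_gafa12} as a ``simple result'' and only the statement is given in Section~\ref{sec:reduction}. So there is no proof to compare against here.

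Your argument is correct. The reduction of the Hessian to block-diagonal form via differentiation of the identity $\partial_v f(u,0)\equiv 0$ is exactly the right observation, and your treatment of the index additivity is careful in the one place where infinite dimensions matter: rather than diagonalizing $A$ and $C$, you pass to the $A$-orthogonal complement of a maximal negative subspace and use maximality to force nonnegativity there. The finiteness of $\ind(A)$ and $\ind(C)$, which you need to make the codimension bounds meaningful, is correctly obtained from the easy lower bound $\ind(Q)\ge \ind(A)+\ind(C)$ together with the standing assumption that $\ind(f)=\ind(Q)$ is finite. The non-degeneracy claim at the end also follows cleanly from the block structure of the radical.
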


The subspace $Y$, which is the locus of the critical points of $f$
over the affine subspaces $(y,\cdot)$, is called the \emph{critical
  manifold}.  In applications, the locus of the critical points with
respect to a chosen direction is usually not a linear subspace.  Then
a simple change of variables is applied to reduce the situation to that
of Lemma~\ref{T:reduction}, while the Morse index remains unchanged.

\subsection{Proof of Theorem~\ref{thm:main_cut}}
\label{sec:proof_main_cut_proof}

In this subsection, $\lambda$ is used as both an independent variable
and a function (eigenvalue as a function of parameters).  To reduce
the confusion we denote $\xi = \lambda_n(H^0)$.  Recall that $\psi$ is the
$n$-th eigenfunction of $H^0$ and $\phi$ denotes the number
of internal zeros of $\psi$ on $\Gamma$.

\begin{proof}[Proof of Part~\ref{item:same_eig} of Theorem~\ref{thm:main_cut}]
  By design, $\psi$ is an eigenfunction of $H_{\boldsymbol{\gamma}}$ when
  $\boldsymbol{\gamma}=\widetilde{\boldsymbol\gamma}$; the vertex conditions at the new
  vertices $c_j^\pm$ were specifically chosen to fit $\psi$.  We
  conclude that $\xi \in \sigma\left(H_{\widetilde{\boldsymbol\gamma}}\right)$.

  Since $\psi$ is non-zero on vertices, the corresponding eigenvalue
  of $H_{\widetilde{\boldsymbol\gamma}}$ is simple \cite{Sch_wrcm06,BerKuc_incol12}.
  Eigenfunctions on a tree are Courant-sharp
  \cite{poketal_mat96,Sch_wrcm06,BerKuc_incol12}; in other words the
  eigenfunction number $n$ has $n-1$ internal zeros.  We use this
  property in reverse, concluding that $\psi$ is the eigenfunction
  number $\phi + 1$ of the tree operator $H_{\widetilde{\boldsymbol\gamma}}$.
\end{proof}

\begin{proof}[Proof of Part 2 of Theorem~\ref{thm:main_cut}]
  Here we prove that $\boldsymbol{\gamma} = \widetilde{\boldsymbol\gamma}$ is a
  critical point of $\lambda_{\phi+1}(H_{\boldsymbol{\gamma}})$.

  Consider the Lagrange functional
  \begin{equation}
    \label{F3}
    F_3(\lambda, f, \boldsymbol{\gamma}) = h_{\boldsymbol{\gamma}}[f]
    - \lambda\left(\sum_{e \in E(T)}\int_e f^2(x)\,dx - 1\right)
  \end{equation}
  where $h_{\boldsymbol{\gamma}}[f]$ is the quadratic form of the operator
  $H_{\boldsymbol{\gamma}}$, given by equation~\eqref{eq:qfhgamma}.  Observe
  that $\lambda_{\phi+1}(H_{\boldsymbol{\gamma}}) =: \lambda(\boldsymbol{\gamma})$ is
  a restriction of $F_3$ onto a
  submanifold, namely,
  \begin{equation}
    \label{eq:restriction_of_F3}
    \lambda_n(\boldsymbol{\gamma}) = F_3\Big(\lambda(\boldsymbol{\gamma}), f(\boldsymbol{\gamma}),
    \boldsymbol{\gamma}\Big),
  \end{equation}
  where $f(\boldsymbol{\gamma})$ is the normalized $(\phi+1)$-th eigenfunction of
  $H_{\boldsymbol{\gamma}}$.  We will now show that $(\xi, \psi,
  \widetilde{\boldsymbol\gamma})$ is a critical point of $F_3$; then criticality
  of the function $\lambda_{\phi+1}(H_{\boldsymbol{\gamma}})$ will follow
  immediately.

  We know from Lemma \ref{lem:index} that the the eigenpair $(\xi,
  \psi)$ is a critical point of the Lagrange functional and therefore
  \begin{equation*}
    \left.\frac{\partial F_3}{\partial \lambda}\right|_{(\lambda, f,
      \boldsymbol{\gamma}) = (\xi, \psi, \widetilde{\boldsymbol\gamma})} = 0
    \hspace{.3in}\mbox{and}\hspace{.3in}
    \left.\frac{\partial F_3}{\partial f}\right|_{(\lambda, f,
      \boldsymbol{\gamma}) = (\xi, \psi, \widetilde{\boldsymbol\gamma})} = 0.
  \end{equation*}
  Additionally, we calculate from equation~(\ref{eq:qfhgamma}) that
  \begin{equation*}
    \left. \frac{\partial F_3}{\partial \gamma_j}\right|_{(\lambda, f, \boldsymbol{\gamma}) =
      (\xi, \psi, \widetilde{\boldsymbol\gamma})} = \psi^2(c_j^+) - \psi^2(c_j^-)=0
    \hspace{.2in} \mbox{ for } j=1,\ldots,\beta,
  \end{equation*}
  since $\psi \in \widetilde{H}^1(\Gamma,\R)$ is continuous at all cut
  points $c_j$.  This proves that $\widetilde{\boldsymbol\gamma}$ is a critical point
  of $\lambda_{\phi+1}(H_{\boldsymbol{\gamma}})$. The non-degeneracy of the point will follow from the proof of part 3.
\end{proof}

\begin{proof}[Proof of Part 3 of Theorem~\ref{thm:main_cut}]
  We will calculate the index of the critical point
  $\widetilde{\boldsymbol\gamma}$ of $\lambda_{\phi+1}(H_{\boldsymbol{\gamma}})$ in two
  steps.  We will first establish that the index of $(\xi, \psi,
  \widetilde{\boldsymbol\gamma})$ as a critical point of $F_3$ is equal to
  $n+\beta$.  Then we will apply Lemma~\ref{T:reduction}
  to the restriction introduced in ~(\ref{eq:restriction_of_F3})
  in order to deduce the final result.  In fact the second step is simpler
  and we start with it to illustrate our technique.

  \textbf{Index of the critical point $\boldsymbol{\gamma} =
    \widetilde{\boldsymbol\gamma}$ of $\lambda(H_{\boldsymbol{\gamma}})$.}
  Assume we have already shown that $(\xi, \psi,
  \widetilde{\boldsymbol\gamma})$ is a non-degenerate critical point of $F_3$ of
  index $n+\beta$.
  Define the following change of variables:
  \begin{equation}
    \begin{cases}
      \hat{\lambda} = \lambda - \lambda(\boldsymbol{\gamma}),\\
      \hat{f} = f - f(\boldsymbol{\gamma}),\\
      \hat{\boldsymbol\gamma} = \boldsymbol{\gamma} - \widetilde{\boldsymbol\gamma},
    \end{cases}
  \end{equation}
  where $\lambda(\boldsymbol{\gamma})$ is the $(\phi+1)$-th eigenvalue
  of the operator $H_{\boldsymbol{\gamma}}$ and
  $f(\boldsymbol{\gamma})$ is the corresponding normalized
  eigenfunction.  The eigenvalue $\lambda(\boldsymbol{\gamma})$ is
  simple when $\boldsymbol{\gamma} = \widetilde{\boldsymbol\gamma}$
  (see the proof of Part 1 above) and this property is preserved
  locally.

  The critical point $(\xi, \psi, \widetilde{\boldsymbol\gamma})$ corresponds, in
  the new variables, to $(0,0,\boldsymbol{0})$.  The change of variables is
  obviously non-degenerate and therefore the signature of a critical
  point remains unchanged.

  For every fixed $\boldsymbol{\gamma}$, the function $F_3$ is the
  Lagrange functional of the operator $H_{\boldsymbol{\gamma}}$ and by
  Lemma~\ref{lem:index} we conclude that
  $(\lambda(\boldsymbol{\gamma}), f(\boldsymbol{\gamma}))$ is its
  non-degenerate critical point of index $\phi+1$.  In the new
  variables this translates to $(0, 0, \hat{\boldsymbol\gamma})$ being
  a critical point with respect to the first two variables for any
  value of the third variable.  Now we can apply
  Lemma~\ref{T:reduction} to conclude that $\hat{\boldsymbol\gamma} =
  \boldsymbol{0}$ is a non-degenerate critical point of $F_3(0, 0,
  \hat{\boldsymbol\gamma})$ with index $(n+\beta) - (\phi+1)$.

  Since $F_3(0, 0, \hat{\boldsymbol\gamma}) = \lambda_{\phi+1}(\boldsymbol{\gamma})$, we
  obtain the desired conclusion.  It remains to verify the assumption
  that $(\xi, \psi, \widetilde{\boldsymbol\gamma})$ is a non-degenerate critical
  point of $F_3$ of index $n+\beta$.

  \textbf{Index of critical point of $F_3$.} By Remark
  \ref{rem:functiondecomp}, any $f \in D_{\boldsymbol\gamma}$ can be written as
  \begin{equation*}
    f = f_0 + \sum_{j=1}^\beta s_j\rho_{j,\lambda}
    =: f_0 + \boldsymbol{s}\cdot\boldsymbol{\rho_{{}_\lambda}},
  \end{equation*}
  where $f_0 \in D$ and each $\rho_{j,\lambda}$ satisfies $g''(x) +
  (q(x) - \lambda)g(x) = 0$.  Therefore the Lagrange functional $F_3$
  can be re-parametrized as follows:
  \begin{align*}\label{eq:F4}
    F_4(\lambda, f_0, \boldsymbol{s}, \boldsymbol{\gamma}) &:=
    F_3\left(\lambda,  f_0 + \boldsymbol{s}\cdot\boldsymbol{\rho_{_\lambda}}, \boldsymbol{\gamma}\right)\\
    &= h_{\boldsymbol{\gamma}}[f_0 + \boldsymbol{s}\cdot\boldsymbol{\rho_{{}_\lambda}}]
    - \lambda\left(\int_T \left(f_0
        + \boldsymbol{s}\cdot\boldsymbol{\rho_{_\lambda}}\right)^2\, dx - 1\right),
  \end{align*}
  where we understand the integral over the graph $T$ as the sum of
  integrals over all edges of $T$.  We let
  \begin{equation*}
    R_j(\lambda) = \rho_{j,\lambda}'(c_j^+) + \rho_{j,\lambda}'(c_j^-),
  \end{equation*}
  and investigate $F_4$ as a function of $\boldsymbol{s}$ and $\boldsymbol{\gamma}$ while
  $\lambda$ and $f_0$ are held fixed.  It turns out that $(\boldsymbol{s},
  \boldsymbol{\gamma}) = (\boldsymbol{0}, \boldsymbol{R})$ is a critical point.  Indeed, we
  calculate explicitly that
  \begin{align*}
    \left. \frac12 \frac{\partial F_4}{\partial s_j}\right|_{\boldsymbol{s} = 0}
    &=  \int_{T} \left(f_0'\rho_{j,\lambda}'
      + qf_0\rho_{j,\lambda}
      - \lambda  f_0\rho_{j,\lambda}\right)\, dx \\
    &\hspace{3cm}+ \gamma_j f_0(c_j^+)(\rho_{j,\lambda}(c_j^+)-\rho_{j,\lambda}(c_j^-))\\
    &=  \int_{T} f_0\left(-\rho_{j,\lambda}''
        + (q-\lambda)\rho_{j,\lambda}\right)\, dx \\
    &\hspace{3cm}  + f_0(x)\rho_{j,\lambda}'(x)\Big|_{c_j^-}^{c_j^+} + \gamma_j f_0(c_j^+)\\
    &=f_0(c_j^+)\left(\gamma_j -\rho_{j,\lambda}'(c_j^+) - \rho_{j,\lambda}'(c_j^-) \right)
  \end{align*}
  is equal to zero when $\gamma_j = R_j(\lambda)$.  Note that we
  assumed $\chi_v=0$ at every vertex of the graph $\Gamma$.  If
  this is not the case, the corresponding terms cancel out when the
  integration by parts is performed.

  The partial derivatives with respect to $\gamma_j$ also vanish,
  \begin{equation*}
    \left.\frac{\partial F_4}{\partial \gamma_j}\right|_{\boldsymbol{s} = 0}
    = \left[f^2(c_j^+) - f^2(c_j^-) \right]_{\boldsymbol{s}=0}
    = f_0^2(c_j^+) - f_0^2(c_j^-) = 0,
  \end{equation*}
  since the function $f_0 \in D$ is continuous across the cut vertices
  $c_j$.  Here we used the short-hand $f = f_0 +
  \boldsymbol{s}\cdot\boldsymbol{\rho_{_\lambda}}$.

  We can also calculate the Morse index of the critical point $(\boldsymbol{s},
  \boldsymbol{\gamma}) = (\boldsymbol{0}, \boldsymbol{R})$.  The Hessian is block-diagonal with
  $\beta$ blocks of the form
  \begin{equation*}
    \begin{bmatrix}
      \frac{\partial^2 F}{\partial \gamma_j \partial \gamma_j}
      & \frac{\partial^2 F}{\partial s_j \partial \gamma_j}\\
      \frac{\partial^2 F}{\partial \gamma_j \partial s_j}
      & \frac{\partial^2 F}{\partial s_j \partial s_j}\\
    \end{bmatrix}
    =
    \begin{bmatrix}
      0 & 2f_0(c_j^+)\\
      2f_0(c_j^+) & \cdot 
    \end{bmatrix},
  \end{equation*}
  where the value of the second derivative with respect to $s_j$ is
  irrelevant.  Each block has negative determinant and therefore
  contributes one negative and one positive eigenvalue.  The total
  index is therefore $\beta$ and the critical point is obviously non-degenerate.

  Finally, we observe that the critical manifold $(\lambda, f_0, \boldsymbol{0},
  \boldsymbol{R}(\lambda))$ passes through the critical point $(\xi, \psi, \boldsymbol{0},
  \widetilde{\boldsymbol\gamma})$.  To show this we need to verify that $R_j(\xi)
  = \widetilde{\boldsymbol\gamma}_j$ when $f_0=\psi$.  Applying
  Lemma~\ref{lem:wronskian_cutpoints} to the Wronskian of $\psi$ and
  $\rho_j$ we obtain
  \begin{equation*}
    \psi'(c_j^+)\rho_{j,\xi}(c_j^+) - \rho'_{j,\xi}(c_j^+)\psi(c_j^+)
    = -\psi'(c_j^-)\rho_{j,\lambda}(c_j^-) + \rho'_{j,\xi}(c_j^-)\psi(c_j^-).
  \end{equation*}
  Substituting the boundary values of $\rho_j$ (see
  Remark~\ref{rem:functiondecomp}), we arrive at
  \begin{equation*}
    \tilde{\gamma_j} := \frac{\psi'(c_j^+)}{\psi(c_j^+)}
    = \rho'_{j,\xi}(c_j^+) + \rho'_{j,\xi}(c_j^-) =: R_j.
  \end{equation*}

  By using the non-degenerate change of variables
  \begin{equation}
    \begin{cases}
      \hat{\lambda} = \lambda - \xi,\\
      \hat{f} = f_0 - \psi,\\
      \hat{\boldsymbol{s}} = \boldsymbol{s} - \boldsymbol{0},\\
      \hat{\boldsymbol\gamma} = \boldsymbol{\gamma} - \boldsymbol{R}(\lambda),
    \end{cases}
  \end{equation}
  we can again apply Lemma~\ref{T:reduction} with $Y =
  \big\{(\hat{\lambda}, \hat{f}, \boldsymbol{0}, \boldsymbol{0}) \big\}$.  On the subspace $Y$,
  the function $F_3$ is equal to $h[f_0] - \lambda
  \left(\|f_0\|^2-1\right)$, which is precisely the Lagrange
  functional for the operator $H^0$ with the correct domain.  By
  Lemma~\ref{lem:index} it has index $n$ at the point
  $(\lambda,f_0) =(\xi,\psi)$.  Adding the two indices together we obtain
  index $n+\beta$ for the critical point $(\xi, \psi, \boldsymbol{0},
  \widetilde{\boldsymbol\gamma})$ of $F_4$, which corresponds to the critical
  point $(\xi, \psi, \widetilde{\boldsymbol\gamma})$ of $F_3$.  This concludes
  our proof.
\end{proof}


\section{Critical Points of $\lambda_n(H^{\boldsymbol\alpha})$}

In this section we show that $\boldsymbol{\alpha} = (0, \ldots, 0)$ is a
critical point of $\lambda(H^{\boldsymbol{\alpha}})$ and compute its Morse
index, thus concluding the proof of Theorem~\ref{thm:main_mag}.


\subsection{Points of symmetry}

\begin{theorem}\label{neglambda}
  Let $\sigma(\boldsymbol{\alpha})$ denote the spectrum of $H^{\boldsymbol{\alpha}}$ where $\boldsymbol{\alpha} = (\alpha_1,
  \ldots, \alpha_\beta) \in \R^\beta$.  Then all points in the set
  \begin{equation}
    \label{eq:symmetry_points}
    \Sigma = \left\{ \pi (b_1, \ldots, b_\beta) : b_j \in \{0,1\} \right\}
  \end{equation}
  are points of symmetry of $\sigma(\boldsymbol{\alpha})$, i.e.\ for all $\boldsymbol{\alpha} \in \R^\beta$
   and for all $\boldsymbol{\varsigma} \in \Sigma$,
  \begin{equation}
    \label{eq:symmetry_points_def}
    \sigma(\boldsymbol{\varsigma}-\boldsymbol{\alpha}) = \sigma(\boldsymbol{\varsigma} + \boldsymbol{\alpha}),
  \end{equation}
  together with multiplicity.

  Consequently, if $\lambda_n(\boldsymbol{\alpha})$ is the $n$-th eigenvalue
  of $H^{\boldsymbol{\alpha}}$ that is simple at $\boldsymbol{\alpha} = \boldsymbol{\varsigma} \in
  \Sigma$, then $\boldsymbol{\varsigma}$ is a critical point of the function
  $\lambda_n(\boldsymbol{\alpha})$.
\end{theorem}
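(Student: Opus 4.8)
The plan is to exhibit a simple unitary (actually anti-unitary, but I will phrase it as an honest equivalence of spectra) symmetry of the operator $H^{\boldsymbol{\alpha}}$ that flips the sign of the fluxes, and then note that any symmetry fixing a point $\boldsymbol{\varsigma}$ forces the gradient of a simple eigenvalue to vanish there. Concretely, I would work with the representative $H^{\boldsymbol{\alpha}}$ in the gauge of Lemma~\ref{lem:unit_equiv}, i.e. the operator $-\frac{d^2}{dx^2}+q$ on the cut tree $T$ with vertex conditions inherited from $\Gamma$ and the jump conditions \eqref{eq:phase_jump_cond} at $c_j^\pm$ with phases $\alpha_j$. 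The key observation is that complex conjugation $f\mapsto\overline f$ is a (real-linear) bijection that preserves the differential expression (since $q$ is real) and all the $\delta$-type vertex conditions \eqref{eq:vconditions} (since $\chi_v\in\R$), but sends the jump condition with phase $\alpha_j$ to the jump condition with phase $-\alpha_j$. Hence $f$ is an eigenfunction of $H^{\boldsymbol{\alpha}}$ with eigenvalue $\mu$ if and only if $\overline f$ is an eigenfunction of $H^{-\boldsymbol{\alpha}}$ with the same $\mu$, and this correspondence preserves multiplicities (it is a real-linear isomorphism of eigenspaces, hence complex-linear after the standard identification, so dimensions match). This gives $\sigma(-\boldsymbol{\alpha})=\sigma(\boldsymbol{\alpha})$ with multiplicity.

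Next I would upgrade this to symmetry about an arbitrary point $\boldsymbol{\varsigma}\in\Sigma$. Since each $\varsigma_j\in\{0,\pi\}$, we have $2\varsigma_j\equiv 0\bmod 2\pi$, so the phase $2\varsigma_j-\alpha_j$ is congruent modulo $2\pi$ to $-\alpha_j$. Therefore the substitution $\boldsymbol{\alpha}\mapsto\boldsymbol{\varsigma}+\boldsymbol{\alpha}$ in the conjugation argument gives
\begin{equation*}
  \sigma(\boldsymbol{\varsigma}-\boldsymbol{\alpha})=\sigma\big(-(\boldsymbol{\varsigma}-\boldsymbol{\alpha})\big)
  =\sigma(-\boldsymbol{\varsigma}+\boldsymbol{\alpha})=\sigma(\boldsymbol{\varsigma}+\boldsymbol{\alpha}),
\end{equation*}
where the first equality is the conjugation symmetry and the last uses $-\boldsymbol{\varsigma}\equiv\boldsymbol{\varsigma}\bmod 2\pi$ componentwise. (One must remember throughout that $\lambda_n(H^{\boldsymbol{\alpha}})$ depends on $\boldsymbol{\alpha}$ only modulo $2\pi$ in each coordinate, so all these congruences are legitimate.) This establishes \eqref{eq:symmetry_points_def}.

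For the final assertion: fix $\boldsymbol{\varsigma}\in\Sigma$ and suppose $\lambda_n(\boldsymbol{\alpha})$ is simple at $\boldsymbol{\alpha}=\boldsymbol{\varsigma}$. Simple eigenvalues depend analytically (in particular $C^1$) on parameters in a neighbourhood of $\boldsymbol{\varsigma}$, and the ordering is locally preserved, so $\lambda_n(\boldsymbol{\alpha})$ is a smooth function near $\boldsymbol{\varsigma}$. Writing $\boldsymbol{\alpha}=\boldsymbol{\varsigma}+\mathbf t$, equation \eqref{eq:symmetry_points_def} says $\lambda_n(\boldsymbol{\varsigma}+\mathbf t)=\lambda_n(\boldsymbol{\varsigma}-\mathbf t)$ for all small $\mathbf t$, i.e. the function is even in $\mathbf t$; differentiating at $\mathbf t=0$ forces $\nabla_{\mathbf t}\lambda_n=0$, so $\boldsymbol{\varsigma}$ is a critical point. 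The main obstacle I anticipate is purely bookkeeping rather than conceptual: one has to be careful that the jump conditions \eqref{eq:phase_jump_cond} are the right gauge in which conjugation acts so cleanly (in a generic gauge, conjugation sends $A$ to $-A$, which is again the flux-reversal, so this is robust), and that the multiplicity statement genuinely transfers — i.e. that the real-linear conjugation map induces an isomorphism of complex eigenspaces of the same dimension. Both points are standard, but they are where a careless write-up would slip.
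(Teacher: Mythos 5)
Your proposal is correct and follows essentially the same route as the paper: the paper also uses complex conjugation as the flux-reversing symmetry together with the fact that $e^{2i\varsigma_j}=1$ for $\varsigma_j\in\{0,\pi\}$, conjugating the jump conditions \eqref{eq:phase_jump_cond} directly at $\boldsymbol{\varsigma}-\boldsymbol{\alpha}$ rather than factoring through $\sigma(-\boldsymbol{\alpha})=\sigma(\boldsymbol{\alpha})$ and $2\pi$-periodicity, which is only a cosmetic difference. Your explicit derivation of criticality from evenness and analyticity of the simple eigenvalue is the standard argument the paper leaves implicit.
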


\begin{proof}
  We will show that if $f(x)$ is an eigenfunction of
  $H^{\boldsymbol{\varsigma} - \boldsymbol{\alpha}}$, then $\overline{f(x)}$ is an
  eigenfunction of $H^{\boldsymbol{\varsigma}+\boldsymbol{\alpha}}$.  Since the
  operator $H^{\boldsymbol{\alpha}}$ is self-adjoint, we know that the
  eigenvalues are real.  Taking the complex conjugate of the
  eigenvalue equation for $f$ we see that $\overline{f(x)}$ satisfies
  the same equation,
  \begin{equation*}
    -\frac{d^2\overline{f(x)}}{dx^2} + q(x)\overline{f(x)} =  \lambda\overline{f(x)}.
  \end{equation*}
  Similarly, all vertex conditions at the vertices of
  the tree $T$ inherited from $\Gamma$ have real coefficients and therefore $\overline{f(x)}$
  satisfies them too.  The only change occurs at the vertices
  $c^\pm_j$.

  Note that for every $\boldsymbol{\sigma} \in \Sigma$, $\sigma_j$ is equal to either $0$ or $\pi$ so
  $e^{2i\sigma_j} = 1$ for all $j=1,\ldots,\beta$.  Therefore,
  \begin{equation*}
     \overline{e^{i(\sigma_j - \alpha_j)}} = e^{i(-\sigma_j + \alpha_j)}
     = e^{i(\sigma_j + \alpha_j)} e^{-2i\sigma_j}
     = e^{i(\sigma_j + \alpha_j)}.
  \end{equation*}
  Conjugating the vertex conditions of
  $H^{\boldsymbol{\varsigma}-\boldsymbol\alpha}$ at $c^\pm_j$ we obtain
  \begin{equation*}
    \overline{f(c_j^-)}
    = \overline{e^{i(\sigma_j - \alpha_j)}f(c_j^+)}
    = e^{i(\sigma_j + \alpha_j)}\overline{f(c_j^+)},
  \end{equation*}
  and same for the derivative.  Thus $\overline{f(x)}$ satisfies the
  vertex conditions of the operator $H^{\boldsymbol{\varsigma}+\boldsymbol\alpha}$
  and vice versa.  The spectra of these two operators are therefore
  identical.
\end{proof}

\subsection{A non-self-adjoint continuation}
We now consider the same operator $-\frac{d^2}{dx^2} + q(x)$ on the
tree $T$ with different vertex conditions at $c_j^\pm$:
\begin{equation}
  \label{eq:real_jump_cond}
  \begin{split}
    f(c_j^+) &= e^{\alpha_j} f(c_j^-), \\
    f'(c_j^+) &= -e^{\alpha_j} f'(c_j^-),
  \end{split}
\end{equation}
i.e.\ the function has a jump in magnitude across the cut.  It is easy
to see that these conditions are obtained from
\eqref{eq:phase_jump_cond} by changing $\boldsymbol{\alpha}$ to
$-i\boldsymbol{\alpha}$.  We will denote the operator with vertex conditions
$\eqref{eq:real_jump_cond}$ at the vertices in $T\backslash\Gamma$ by
$H^{i\boldsymbol{\alpha}}$.

\begin{remark}
  The operator of $H^{i\boldsymbol{\alpha}}$ is not self-adjoint for $\boldsymbol{\alpha} \in \R^\beta$.
  A simple example is the interval $[0,\pi]$ with $q(x) = 0$ and conditions
  \begin{equation*}
    f(0) = e^\alpha f(\pi) \qquad \mbox{and}
    \qquad f'(0) = e^\alpha f'(\pi),
  \end{equation*}
  which has complex eigenvalues when $\alpha \neq 0$.  Indeed,
  the eigenvalues are easily calculated to be
  \begin{equation*}
    \lambda_n = \left(2n \pm \frac{\alpha}{\pi}i\right)^2.
  \end{equation*}
\end{remark}

\begin{lemma}
  If $\lambda_n(H^0)$ is simple, then locally around $\boldsymbol{\alpha} =
  (0,\ldots, 0)$ the eigenvalue $\lambda_n(H^{i\boldsymbol{\alpha}})$ is
  real.  The corresponding eigenfunction is real too.
\end{lemma}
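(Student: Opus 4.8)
The plan is to continue the simple eigenvalue $\lambda_n(H^0)$ into an analytic family $\big(\lambda_n(\boldsymbol{\alpha}),f(\boldsymbol{\alpha})\big)$ of eigenpairs of $H^{i\boldsymbol{\alpha}}$ for $\boldsymbol{\alpha}$ near the origin, and then to exploit a complex-conjugation symmetry of $H^{i\boldsymbol{\alpha}}$ to pin the continued eigenvalue to the real axis. The only point that needs care is setting up the perturbation theory, since the vertex conditions \eqref{eq:real_jump_cond} themselves depend on $\boldsymbol{\alpha}$. To handle this, I would argue exactly as in Lemma~\ref{lem:unit_equiv}, but with a \emph{real} gauge factor: multiplication by a bounded, invertible, real-valued function (which spreads the jumps \eqref{eq:real_jump_cond} along the complementary edges) turns $H^{i\boldsymbol{\alpha}}$ into the operator $-\left(\frac{d}{dx}-B(x)\right)^2+q(x)$ on $\Gamma$ with the \emph{fixed} $\delta$-type conditions \eqref{eq:vconditions}, where $B=\sum_j \alpha_j\eta_j$ for fixed bump one-forms $\eta_j$ vanishing near the vertices. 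This is a holomorphic family of type (A) in $\boldsymbol{\alpha}$ on a constant domain (the extra first-order term is relatively bounded with relative bound zero), it equals $H^0$ at $\boldsymbol{\alpha}=\boldsymbol{0}$, and the similarity preserves eigenvalues, their algebraic multiplicities, and the reality of eigenfunctions (the gauge factor being real). Since $\lambda_n(H^0)$ is simple, Kato's theory then provides a closed disk $U\subset\C$ centered at $\lambda_n(H^0)$ and a neighborhood of $\boldsymbol{\alpha}=\boldsymbol{0}$ in which $H^{i\boldsymbol{\alpha}}$ has exactly one eigenvalue $\lambda_n(\boldsymbol{\alpha})$ in $U$, of algebraic multiplicity one, depending holomorphically on $\boldsymbol{\alpha}$, with a holomorphic rank-one eigenprojection; fix an eigenfunction $f(\boldsymbol{\alpha})$ of $H^{i\boldsymbol{\alpha}}$ with $f(\boldsymbol{0})=\psi$.

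Next I would record the symmetry. For real $\boldsymbol{\alpha}$ the factors $e^{\alpha_j}$ in \eqref{eq:real_jump_cond} are real, the potential $q$ is real, and the vertex conditions inherited from $\Gamma$ have real coefficients; hence if $(\lambda,f)$ is an eigenpair of $H^{i\boldsymbol{\alpha}}$ then so is $(\overline{\lambda},\overline{f})$. Shrinking $U$ so that it is invariant under complex conjugation (automatic, since its center $\lambda_n(H^0)$ is real) and shrinking the $\boldsymbol{\alpha}$-neighborhood so that $\lambda_n(\boldsymbol{\alpha})$ stays in $U$, we see that $\overline{\lambda_n(\boldsymbol{\alpha})}$ is also an eigenvalue of $H^{i\boldsymbol{\alpha}}$ lying in $U$; by the uniqueness of the eigenvalue in $U$ we conclude $\overline{\lambda_n(\boldsymbol{\alpha})}=\lambda_n(\boldsymbol{\alpha})$, i.e.\ $\lambda_n(H^{i\boldsymbol{\alpha}})$ is real near the origin. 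An alternative route to this, which connects to the preceding section, is to note that by Theorem~\ref{neglambda} the eigenvalue $\lambda_n(H^{\boldsymbol{\alpha}})$ as a function of the \emph{real} flux vector is even near $\boldsymbol{0}$ and, being real there, has a real-coefficient Taylor series; its holomorphic continuation evaluated at the imaginary argument corresponding (via the Remark) to \eqref{eq:real_jump_cond} then contains only the powers $i^{2k}=\pm1$ and is therefore real.

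For the eigenfunction, once $\lambda_n(\boldsymbol{\alpha})$ is real and algebraically simple, its eigenspace is one-dimensional; it contains $f(\boldsymbol{\alpha})$ and, by the symmetry above, $\overline{f(\boldsymbol{\alpha})}$, hence it contains $g(\boldsymbol{\alpha}):=\frac{1}{2}\big(f(\boldsymbol{\alpha})+\overline{f(\boldsymbol{\alpha})}\big)$. This function is real-valued and solves the real ODE $-g''+qg=\lambda_n(\boldsymbol{\alpha})g$ together with the real vertex conditions of $H^{i\boldsymbol{\alpha}}$, and it is nonzero for $\boldsymbol{\alpha}$ near the origin because $g(\boldsymbol{0})=\psi\neq 0$ and $f$ depends continuously on $\boldsymbol{\alpha}$. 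Thus $g(\boldsymbol{\alpha})$ is a real eigenfunction of $H^{i\boldsymbol{\alpha}}$ for $\lambda_n(\boldsymbol{\alpha})$, which completes the argument.

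The only genuine obstacle is the bookkeeping in the first paragraph: verifying that the $\boldsymbol{\alpha}$-dependent vertex conditions can be absorbed into a type-(A) holomorphic family so that the Riesz-projection machinery applies and yields the ``unique, algebraically simple eigenvalue in a fixed disk'' statement. The gauge reformulation reduces this to a textbook situation, after which the conjugation-symmetry and eigenfunction steps are short and formal.
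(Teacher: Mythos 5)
Your proposal is correct and follows essentially the same route as the paper: analytic perturbation theory keeps $\lambda_n$ simple near $\boldsymbol{\alpha}=\boldsymbol{0}$, and the real coefficients of $H^{i\boldsymbol{\alpha}}$ force the conjugate of the continued eigenvalue to coincide with it, giving reality of the eigenvalue and, by taking the real part of the eigenfunction, of the eigenfunction as well. The paper compresses this into a citation of Kato plus the observation that a simple real eigenvalue would have to become double before turning complex, whereas you supply the supporting details (the real gauge transformation yielding a fixed-domain type-(A) family, the conjugation-invariant Riesz disk in place of the ``must become double'' continuity argument, and the explicit real eigenfunction $\tfrac12(f+\overline{f})$); these are refinements of, not departures from, the same argument.
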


\begin{proof}
  By standard perturbation theory \cite{Kato_book} (see also
  \cite{BerKuc_incol12} for results specifically on graphs) we know
  that $\lambda_n(H^{i\boldsymbol{\alpha}})$ is an analytic function of
  $\boldsymbol\alpha$ and since $\lambda_n(H^0)$ is simple,
  $\lambda_n(H^{i\boldsymbol{\alpha}})$ remains simple in a neighborhood of
  $\boldsymbol{\alpha} = (0,\ldots, 0)$.  Since the operator
  $H^{i\boldsymbol{\alpha}}$ has real coefficients, its complex eigenvalues
  must come in conjugate pairs.  For this to happen, the real
  eigenvalue must first become double.  Since
  $\lambda_n(H^{i\boldsymbol{\alpha}})$ is simple near $(0,\ldots,0)$, the
  eigenvalue is real there.
\end{proof}

We note that since we impose no restrictions on the eigenvalues below
$\lambda_n(H^0)$, some of them might turn complex as soon as
$\boldsymbol\alpha\neq 0$.  In this case, the ``$n$-th'' eigenvalue
$\lambda_n(H^{i\boldsymbol{\alpha}})$ refers to the unique continuation
of $\lambda_n(H^0)$.  Locally, of course, it is the same as having the
eigenvalues ordered by their real part.

\subsection{Connection between $H_{\boldsymbol\gamma}$ and $H^{i\boldsymbol\alpha}$}
\label{sec:mappingR}

Locally around $\boldsymbol{\gamma}=\widetilde{\boldsymbol\gamma}$ we introduce a
mapping $R: \boldsymbol{\gamma} \mapsto \boldsymbol{\alpha}$ so that
$\lambda_{\phi+1}\left(H_{\boldsymbol{\gamma}}\right)
= \lambda_n\left(H^{i\boldsymbol{\alpha}}\right)$
when $R(\boldsymbol{\gamma}) = \boldsymbol{\alpha}$.

For a given $\boldsymbol{\gamma}$, we find the $(\phi+1)$-th eigenfunction of
$H_{\boldsymbol{\gamma}}$, denoting it by $g$.  We then let
\begin{equation}
  \label{eq:gamma_to_alpha}
  e^{\alpha_j} = \frac{g(c_j^+)}{g(c_j^-)}
\end{equation}
and, since $g$ satisfies equation~\eqref{eq:cut_cond}, it is now
easy to check that $g$ is indeed an eigenfunction of
$H^{i\boldsymbol{\alpha}}$.

\begin{lemma}
  The function $R$ is a non-degenerate diffeomorphism.  Therefore, the
  point $\boldsymbol{\alpha} = (0,\ldots, 0)$ is a critical point of the
  function $\lambda_n\left(H^{i\boldsymbol{\alpha}}\right)$ of index
  $n-1+\beta-\phi$.
\end{lemma}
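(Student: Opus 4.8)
The plan is to verify directly that $R$ is a local diffeomorphism near $\widetilde{\boldsymbol\gamma}$ by computing its Jacobian, and then transport the Morse index result of Theorem~\ref{thm:main_cut} through $R$. First I would note that $R$ is well-defined and smooth in a neighborhood of $\widetilde{\boldsymbol\gamma}$: the $(\phi+1)$-th eigenvalue $\lambda_{\phi+1}(H_{\boldsymbol\gamma})$ is simple at $\widetilde{\boldsymbol\gamma}$ (shown in the proof of Part~1 of Theorem~\ref{thm:main_cut}), hence the eigenpair $(\lambda,g)$ depends analytically on $\boldsymbol\gamma$ by standard perturbation theory, and normalizing $g$ (say $g(c_j^-)\neq 0$, which holds at $\widetilde{\boldsymbol\gamma}$ since $\psi$ is non-zero at the cut points) makes the ratios $g(c_j^+)/g(c_j^-)$ smooth. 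At $\boldsymbol\gamma=\widetilde{\boldsymbol\gamma}$ the eigenfunction is $\psi$, which is continuous across every cut, so $e^{\alpha_j}=\psi(c_j^+)/\psi(c_j^-)=1$, i.e. $R(\widetilde{\boldsymbol\gamma})=\boldsymbol 0$, as required.

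The main work is showing $dR$ is invertible at $\widetilde{\boldsymbol\gamma}$. I would differentiate the relation $g'(c_j^+)=\gamma_j g(c_j^+)$ (and the $c_j^-$ analogue) from \eqref{eq:cut_cond} with respect to $\gamma_k$, using the fact that $g$ satisfies the eigenvalue equation and that the off-diagonal couplings are controlled by the functions $\rho_{j,\lambda}$ of Remark~\ref{rem:functiondecomp}. The natural tool here, exactly as flagged in the introduction, is the Wronskian: pairing $\partial_{\gamma_k}g$ against $g$ and using Lemma~\ref{sum} and Lemma~\ref{lem:wronskian_cutpoints} should show that the derivative $\partial_{\gamma_k}\alpha_j$ is expressed in terms of Wronskian data at the cut points. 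I expect the Jacobian matrix $\partial\alpha_j/\partial\gamma_k$ at $\widetilde{\boldsymbol\gamma}$ to come out diagonal (or at least triangular with non-zero diagonal), the $j$-th diagonal entry being a non-vanishing multiple of $\psi(c_j^+)\psi(c_j^-)$ times a Wronskian-type quantity that cannot vanish because $\lambda_n(H^0)$ is simple and $\psi$ is non-zero at the cut points — the same non-degeneracy that made the interlacing strict in Remark~\ref{rem:functiondecomp}. This is the step I expect to be the main obstacle: pinning down that the relevant Wronskian or derivative quantity is genuinely non-zero, rather than merely generically non-zero, and doing so using only the simplicity hypothesis.

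Once $R$ is established as a local diffeomorphism fixing the origin, the index statement is immediate. By construction $\lambda_{\phi+1}(H_{\boldsymbol\gamma})=\lambda_n(H^{i\boldsymbol\alpha})$ whenever $\boldsymbol\alpha=R(\boldsymbol\gamma)$, so $\lambda_n(H^{i\boldsymbol\alpha})=\lambda_{\phi+1}(H_{R^{-1}(\boldsymbol\alpha)})$ near $\boldsymbol\alpha=\boldsymbol 0$. A non-degenerate change of variables preserves the property of being a critical point, preserves non-degeneracy, and preserves the Morse index of a critical point; since $\widetilde{\boldsymbol\gamma}$ is a non-degenerate critical point of $\lambda_{\phi+1}(H_{\boldsymbol\gamma})$ of index $n-1+\beta-\phi$ by parts~2 and~3 of Theorem~\ref{thm:main_cut}, the point $\boldsymbol\alpha=\boldsymbol 0$ is a non-degenerate critical point of $\lambda_n(H^{i\boldsymbol\alpha})$ of the same index $n-1+\beta-\phi$. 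This completes the proof of the lemma.
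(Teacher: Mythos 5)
Your setup (analyticity of $R$ via simplicity of the eigenvalue, $R(\widetilde{\boldsymbol\gamma})=\boldsymbol 0$, and the transport of criticality, non-degeneracy and Morse index through a diffeomorphism) is sound and matches the paper's second half. But the heart of the lemma --- that $R$ is a local diffeomorphism --- is not actually established in your proposal: you reduce it to invertibility of the Jacobian $\partial\alpha_j/\partial\gamma_k$ and then only conjecture (``I expect\dots'') that this matrix is diagonal, or triangular with non-vanishing diagonal. There is no evident reason for such a structure: perturbing a single $\gamma_k$ changes the eigenfunction $g$ globally on the tree, so in general every ratio $g(c_j^+)/g(c_j^-)$ moves, and Lemmas~\ref{sum} and~\ref{lem:wronskian_cutpoints} relate two solutions at the \emph{same} parameter values rather than derivatives with respect to the parameters, so they do not obviously produce the off-diagonal cancellations you hope for. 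The obstacle you yourself flag is precisely the missing step, and it is the whole content of the lemma.

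The paper sidesteps the Jacobian entirely by exhibiting an explicit inverse. Given $\boldsymbol\alpha$ near $\boldsymbol 0$, let $\varphi$ be the (real, locally simple) $n$-th eigenfunction of $H^{i\boldsymbol\alpha}$ and set $\gamma_j=\varphi'(c_j^+)/\varphi(c_j^+)$. This reverses $R$ because the $(\phi+1)$-th eigenfunction $g$ of $H_{\boldsymbol\gamma}$ satisfies the Robin conditions \eqref{eq:cut_cond}, so its logarithmic derivative at $c_j^+$ literally reads off $\gamma_j$, while \eqref{eq:gamma_to_alpha} reads off $\alpha_j$ from the very same function viewed as an eigenfunction of $H^{i\boldsymbol\alpha}$: the two parameter sets are just two ways of recording the boundary data of one shared eigenfunction. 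Both $R$ and $R^{-1}$ are then analytic as compositions of analytic maps (simple eigenpairs depend analytically on parameters), hence $R$ is a non-degenerate diffeomorphism, and no non-vanishing of any Wronskian-type quantity needs to be verified. If you insist on your route, you must carry out the first-order perturbation computation of $\partial\alpha_j/\partial\gamma_k$ and prove invertibility outright; as written, the claim is asserted rather than proved, whereas the inverse-map argument makes it immediate.
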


\begin{proof}
  The function $R$ is an analytic function in a neighborhood of
  $\widetilde{\boldsymbol\gamma}$ since the eigenfunctions are analytic functions of
  the parameters and therefore $R$ is a composition of analytic
  functions.  We can define $R^{-1}$ by reversing the process,
  i.e.\ for a given $\boldsymbol\alpha$ find the (real) $n$-th eigenfunction
  $\varphi$ of $H^{i\boldsymbol{\alpha}}$ and let $\gamma_j = \varphi'(c^+_j)
  / \varphi(c^+_j)$.  By the same arguments, $R^{-1}$ is also an
  analytic function in a neighborhood of $(0,\ldots, 0)$.  Therefore
  $R$ is a non-degenerate diffeomorphism.

  A diffeomorphism preserves the index and therefore the index of
  $(0,\ldots, 0)$ of the function
  $\lambda_n\left(H^{i\boldsymbol{\alpha}}\right)$ is the same as the index of
  $\widetilde{\boldsymbol\gamma}$ of the function
  $\lambda_{\phi+1}\left(H_{\boldsymbol{\gamma}}\right)$, which was computed
  in Theorem~\ref{thm:main_cut}.
\end{proof}

\subsection{From $H^{i\boldsymbol\alpha}$ to $H^{\boldsymbol\alpha}$}

\begin{proof}[Proof of Theorem~\ref{thm:main_mag}]
  The function $\lambda_n(H^{\boldsymbol{\alpha}})-\xi$ is analytic and,
  locally around $(0,\ldots, 0)$, quadratic in $\{\alpha_j\}$ because
  $(0, \ldots, 0)$ is a critical point so the linear term (the first
  derivative) is zero.  Substituting $\boldsymbol{\alpha} \to
  -i\boldsymbol{\alpha}$ into the quadratic term results in an overall minus,
  that is
  \begin{equation*}
    \lambda_n(H^{i\boldsymbol{\alpha}}) - \xi
    = -\left(\lambda_n(H^{\boldsymbol{\alpha}}) - \xi\right) +
    \mbox{higher order terms}.
  \end{equation*}
  Therefore the index of $(0,\ldots, 0)$ as a critical point of
  $\lambda_n(H^{\boldsymbol{\alpha}})$ is the dimension of the space of
  variables minus the index of $\lambda_n(H^{i\boldsymbol{\alpha}})$.
   Thus it is equal to
  \begin{equation*}
    \beta - (n-1+\beta-\phi) = \phi - (n-1).
  \end{equation*}
\end{proof}


\section{Connection to partitions on graphs}
\label{sec:partition}

The set of points on which a real eigenfunction vanishes (called the
\emph{nodal set}) generically has co-dimension 1.  Thus, when one
considers a problem which is not 1-dimensional (or
quasi-1-dimensional, like a graph), counting the \emph{number} of
zeros does not make sense. Then one usually counts the number of
``nodal domains'': the connected components obtained after removing
the nodal set from the domain.  We refer to the number of nodal
domains as the \emph{nodal domain count}.  It should be noted that the
nodal domain count is a non-local property \cite{BanOreSmi_pspm08}.
Let $\nu_n$ denote the number of nodal domains of the $n$-th
eigenfunction. Then a classical result of Courant
\cite{Cou_ngwgmp23,CourantHilbert_volume1}, in the case of the Dirichlet
Laplacian, bounds $\nu_n$ from above by $n$, independently of
dimension.

An interesting new point of view on the nodal domains arose recently,
see \cite{HelHofTer_aihp09} and references therein.  Namely, a domain
is partitioned into subdomains and the following question is asked:
when does a given partition coincide with the nodal partition
corresponding to an eigenfunction of the Dirichlet Laplacian on the
original domain?  It turns out that there is a natural ``energy''
functional defined on partitions whose minima correspond to the
eigenfunctions satisfying $\nu_n=n$.  Restricting the set of allowed
partitions, it was further found
\cite{Banetal_cmp12,BerRazSmi_jpa12,BerKucSmi_gafa12} that all
critical points of this functional correspond to eigenfunctions and
the ``nodal deficiency'' $n-\nu_n$ is equal to the Morse index of the
critical point (which is zero for a minimum).

The latter result was first established on graphs in
\cite{Banetal_cmp12} and here we outline how its strengthened version
follows from our Theorem~\ref{thm:main_cut}.  We define a \emph{proper
  $m$-partition} $P$ of a graph $\Gamma$ as a set of $m$ points lying
on the edges of the graph (and not on the vertices).  Enforcing
Dirichlet conditions at these vertices effectively separates the graph
$\Gamma$ into \emph{partition subgraphs} which we will denote
$\Gamma_j$.  The functional mentioned above is defined as
\begin{equation}
  \label{eq:Lambda_def}
  \Lambda(P) := \max_{j} \lambda_1(\Gamma_j),
\end{equation}
where $\lambda_1(\Gamma_j)$ is the first eigenvalue of the partition
subgraph $\Gamma_j$.  The conditions on the vertices of $\Gamma_j$ are
either inherited from $\Gamma$ or taken to be Dirichlet on the newly
formed vertices.

The partition $P$ should be understood as a candidate for the nodal
set of an eigenfunction of $\Gamma$.  It is easy to see that the
partition points break every cycle of $\Gamma$ if and only if the
number $\nu(P)$ of the partition subgraphs is related to $m$ by
\begin{equation}
  \label{eq:m_and_nu}
  \nu(P) = m - \beta + 1.
\end{equation}
We start by considering the partitions and eigenfunctions that satisfy
the above property.  In section~\ref{sec:few_zeros} we will treat the
case of partitions where some of the cycles survive.

Further, we call an $m$-partition an \emph{equipartition} if all
subgraphs $\Gamma_j$  share the same eigenvalue:
\begin{equation*}
  \lambda_1(\Gamma_{j_1}) = \lambda_1(\Gamma_{j_2}),
  \qquad \mbox{for all }j_1, j_2.
\end{equation*}
It is easy to see that the partition defined by the nodal set of an
eigenfunction is an equipartition.
In \cite{Banetal_cmp12} it was shown that the set of
$m$-equipartitions on $\Gamma$ can be parametrized using $\beta$
parameters $\{\gamma_j\}$ and the operator $H_{\boldsymbol{\gamma}}$ defined
in section~\ref{sec:cutting}: we take the $(m+1)$-th eigenfunction of
$H_{\boldsymbol{\gamma}}$ and its zeros (transplanted to the original graph
$\Gamma$) define an equipartition.  With such parametrization, the
energy $\Lambda(P)$ of the partition is simply the $(m+1)$-th eigenvalue
$\lambda_{m+1}(H_{\boldsymbol\gamma})$.  Now Theorem~\ref{thm:main_cut}
immediately implies the following.

\begin{corollary}
  \label{cor:partition}
  Suppose the $n$-th eigenvalue of $\Gamma$ is simple and its
  eigenfunction $\psi$ is non-zero on vertices.  Denote by $\phi$ the
  number of zeros of $\psi$ and by $\nu$ the number of its nodal
  domains.  If the zeros of the eigenfunction break every cycle of
  $\Gamma$, then the $\phi$-partition defined by the zeros of $\psi$ is
  a non-degenerate critical point of the functional $\Lambda$ on the
  set of equipartitions.  The Morse index of the critical point is
  equal to $n-\nu$.
\end{corollary}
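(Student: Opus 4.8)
The plan is to transport the conclusions of Theorem~\ref{thm:main_cut} through the $\boldsymbol{\gamma}$-parametrization of equipartitions recalled above. First I would fix a spanning tree of $\Gamma$ and choose the cut points $c_1,\ldots,c_\beta$ on the complementary edges so that none of them is a vertex or one of the (finitely many, isolated) zeros of $\psi$; this is always possible, and it guarantees that $\psi$ meets all hypotheses of Theorem~\ref{thm:main_cut}. I then set $m=\phi$. According to the parametrization of \cite{Banetal_cmp12} discussed above, in a neighbourhood of any $\phi$-equipartition whose points break every cycle, the equipartitions are smoothly parametrized by $\boldsymbol{\gamma}\in\R^\beta$ via the $(\phi+1)$-th eigenfunction of $H_{\boldsymbol{\gamma}}$, and in this chart the energy functional becomes $\Lambda\big(P(\boldsymbol{\gamma})\big)=\lambda_{\phi+1}(H_{\boldsymbol{\gamma}})$. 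So the corollary reduces to (i) identifying the parameter value that corresponds to the partition $P_\psi$ defined by the zeros of $\psi$, and (ii) reading off the critical-point data of $\lambda_{\phi+1}(H_{\boldsymbol{\gamma}})$ at that value from Theorem~\ref{thm:main_cut}.

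For step (i), I would argue exactly as in the proof of Part~\ref{item:same_eig} of Theorem~\ref{thm:main_cut}: regarded as a function on the cut tree $T$, the eigenfunction $\psi$ satisfies the conditions \eqref{eq:cut_cond} defining $H_{\widetilde{\boldsymbol\gamma}}$, and it still has exactly $\phi$ internal zeros, since cutting at the $c_j$ (which are not zeros of $\psi$) neither creates nor destroys zeros. By Courant sharpness of eigenfunctions on a tree, $\psi$ is therefore the $(\phi+1)$-th eigenfunction of $H_{\widetilde{\boldsymbol\gamma}}$, so transplanting its zero set from $T$ back to $\Gamma$ returns precisely the $\phi$ zeros of $\psi$, i.e.\ $P(\widetilde{\boldsymbol\gamma})=P_\psi$. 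The hypothesis that the zeros of $\psi$ break every cycle is what places $P_\psi$ inside the parametrized family in the first place, and hence pins down the parameter as $\widetilde{\boldsymbol\gamma}$.

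It then remains to quote Theorem~\ref{thm:main_cut}: Parts~2 and~3 say that $\widetilde{\boldsymbol\gamma}$ is a non-degenerate critical point of $\lambda_{\phi+1}(H_{\boldsymbol{\gamma}})$ with Morse index $n-1+\beta-\phi$. Since, in the chart $\boldsymbol{\gamma}\mapsto P(\boldsymbol{\gamma})$, the restriction of $\Lambda$ to the set of equipartitions is exactly $\lambda_{\phi+1}(H_{\boldsymbol{\gamma}})$, the partition $P_\psi$ is a non-degenerate critical point of $\Lambda$ on the set of equipartitions with the same Morse index. Finally I would convert the index to the stated form using \eqref{eq:m_and_nu}: because the zeros of $\psi$ break every cycle and $|P_\psi|=\phi$, the number of nodal domains is $\nu=\phi-\beta+1$, whence $n-1+\beta-\phi=n-\nu$, which is the asserted value.

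I expect the only genuinely delicate point to be the assertion that $\boldsymbol{\gamma}\mapsto P(\boldsymbol{\gamma})$ is a bona fide smooth chart on the manifold of equipartitions near $P_\psi$, so that ``non-degenerate critical point'' and ``Morse index'' are coordinate-independent notions that may be pulled back and pushed forward; this is precisely where one must invoke the smoothness and local-bijectivity of the parametrization of \cite{Banetal_cmp12}. A useful sanity check is the dimension count: a nearby $\phi$-partition has $\phi$ degrees of freedom, the equipartition condition imposes $\nu-1$ scalar equalities, leaving $\phi-\nu+1=\beta$ free parameters, matching the number of components of $\boldsymbol{\gamma}$. Everything beyond this is a direct quotation of Theorem~\ref{thm:main_cut} combined with the bookkeeping identity \eqref{eq:m_and_nu}.
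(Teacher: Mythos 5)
Your argument is correct and is essentially the paper's own: the corollary is obtained by identifying the restriction of $\Lambda$ to equipartitions, in the $\boldsymbol{\gamma}$-parametrization of \cite{Banetal_cmp12}, with $\lambda_{\phi+1}(H_{\boldsymbol{\gamma}})$, locating the nodal partition of $\psi$ at $\boldsymbol{\gamma}=\widetilde{\boldsymbol\gamma}$, and quoting Theorem~\ref{thm:main_cut} together with the relation $\nu=\phi-\beta+1$ to convert the index $n-1+\beta-\phi$ into $n-\nu$. Your extra remarks on the smoothness of the chart and the dimension count are consistent with, and slightly more explicit than, the paper's brief derivation.
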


Some remarks are in order.  The ``converse'' fact that critical points
of $\Lambda$ correspond to eigenfunctions is easy to establish.  The
main difficulty lies in calculating the Morse index.  In the main
theorem of \cite{Banetal_cmp12}, the non-degeneracy of the critical
point had to be assumed \emph{a priori}.  In
section~\ref{sec:proof_main_cut}, we established that this actually
follows from the other assumptions.  Eigenfunctions
whose zeros do not break all cycles of $\Gamma$ correspond to low values
of $\lambda$, and it can easily be shown that there are only finitely many
such eigenfunctions.  We will handle these eigenfunctions by
introducing cut points only on those cycles which are broken by the
zeros of $\psi$ and correspondingly adjusting the operator
$H_{\boldsymbol\gamma}$.  Finally, the mapping $R$ defined in
section~\ref{sec:mappingR} essentially shows that the equipartitions
can be parameterized using eigenfunctions of the ``magnetic''
Schr\"odinger operator with purely imaginary magnetic field.

\subsection{Partitions with few zeros}
\label{sec:few_zeros}

For eigenfunctions corresponding to low eigenvalues, the nodal set
might not break all the cycles of the graph, see
Fig.~\ref{fig:cut_somecycles}(a).  In this case, the parameterization
of the nearby equipartitions is done via a modification of the
operator $H_\gamma$.  In this section we describe this
parameterization and point out the changes in the proofs of the
analogue of Theorem~\ref{thm:main_cut} that the new parameterization
necessitates.  An outline of the procedure has already appeared in
\cite{Banetal_cmp12,BerRazSmi_jpa12}; however some essential details
have been omitted there.

\begin{figure}[t]
  \centering
  \includegraphics[scale=0.5]{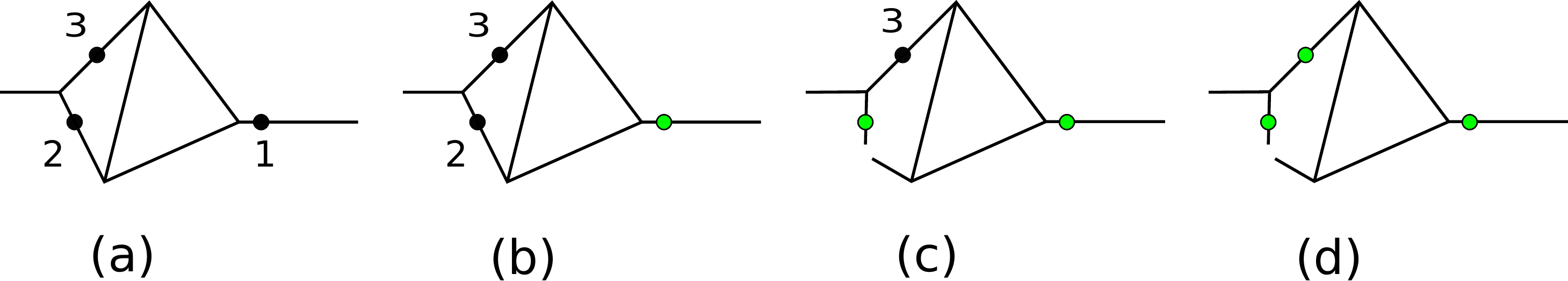}
  \caption{A partition with surviving cycles.  (a) Zeros, marked 1, 2,
    and 3, do not lie on all the cycles of the graph.  To find cut
    points we consider the zeros in sequence. (b) Cutting the graph at
    zero 1 would disconnect it.  (c) Cutting the graph at zero 2 would
    not disconnect it; therefore, a cut point is placed nearby. (d)
    Now, cutting the graph at zero 3 would disconnect the graph, so we
    do not introduce any more cut points.}
  \label{fig:cut_somecycles}
\end{figure}

As mentioned previously, the eigenfunctions we are interested in here
do not have a zero on every cycle.  Hence, unlike the previous case
for large eigenvalues where the corresponding eigenfunctions have at
least one zero on every cycle, we must carefully pick our cut points
to avoid cutting cycles that do not contain any zeros of the
eigenfunction.  To do this we look at the zeros of our eigenfunction
$\psi$ one at a time.  If cutting the edge that contains the zero will
disconnect the graph, we do nothing and remove this zero from
consideration (see Fig.~\ref{fig:cut_somecycles}(b)).  If cutting the
edge at the zero will not disconnect the graph, then we cut that edge
at a nearby point $c_j$ at which $\psi$ is non-zero, calling the new
vertices $c_j^+$ and $c_j^-$ as before (see
Fig.~\ref{fig:cut_somecycles}(c)).  Notice that the manner in which we
order and analyze the zeros does not matter; while the cut positions
and resulting graph may vary, we will make the same number of cuts.

Let us consider the number of cuts $\eta$ more explicitly.  Denote by
$\mathcal{N}$ the zero set of $\psi$ and remove $\mathcal{N}$ from
$\Gamma$ to get the (disconnected) graph
$\Gamma\backslash\mathcal{N}$.  Let $\nu$ be the number of connected
components \{$\Gamma_j$\} after the cutting (the components $\Gamma_j$
are the nodal domains of $\Gamma$ with respect to $\psi$).  Denote
\begin{equation*}
  \beta_{\Gamma\backslash\mathcal{N}} = \sum_{i=1}^\nu \beta_{\Gamma_{j}}
\end{equation*}
where $\beta_\kappa$ is the Betti number of the graph $\kappa$.  It is
easy to see that
\begin{equation}
  \label{eq:eta_alt_def}
  \eta = \beta_\Gamma - \beta_{\Gamma\backslash\mathcal{N}}
\end{equation}
and furthermore
\begin{equation}\label{eq:eta}
  \eta = 1 + \phi - \nu.
\end{equation}
For further details, see Lemma 5.2.1 of \cite{BK_graphs}.  Now we
continue with an alternative statement of Theorem \ref{thm:main_cut}.

\begin{theorem}\label{thm:main_cut_2}
  Let $\psi$ be the eigenfunction of $H^0$ that corresponds to a
  simple eigenvalue $\lambda_n(H^0)$.  We assume that $\psi$ is
  non-zero on internal vertices of the graph.  We denote by $\phi$ the
  number of internal zeros and $\nu$ the number of nodal domains of
  $\psi$ on $\Gamma$.  Let $c_j^\pm$, $j=1,\ldots,\eta$, be the cut
  points created by following the procedure above, where $\eta = 1 +
  \phi - \nu$.

  Let $H_{\boldsymbol{\gamma}}$, $\boldsymbol{\gamma} =
  (\gamma_1,\ldots,\gamma_\eta)$, be the operator obtained from $H^0$
  by imposing the additional conditions
  \begin{equation}  \label{eq:some_cut_cond}
    \begin{split}
      f'(c_j^+) &= \gamma_j f(c_j^+),\\
      f'(c_j^-) &= -\gamma_j f(c_j^-),
    \end{split}
  \end{equation}
  at the cut points.

  Define
  \begin{equation*}
    \tilde{\gamma}_j := \frac{\psi'(c_j^+)}{\psi(c_j^+)} = -\frac{\psi'(c_j^-)}{\psi(c_j^-)}
  \end{equation*}
  and let $\widetilde{\boldsymbol\gamma} =
  (\tilde{\gamma}_1,\ldots,\tilde{\gamma}_\eta)$.  Consider the
  eigenvalues of $H_{\boldsymbol{\gamma}}$ as functions
  $\lambda_n(H_{\boldsymbol{\gamma}})$ of $\boldsymbol{\gamma}$.  Then
  \begin{enumerate}
  \item
    $\lambda_{\phi+1}(H_{\boldsymbol{\gamma}})\Big|_{\boldsymbol{\gamma}
      = \widetilde{\boldsymbol\gamma}}
    = \lambda_n(H^0)$ where $\phi$ is the number of zeros of $\psi$
    on $\Gamma$,
  \item $\boldsymbol{\gamma} = \widetilde{\boldsymbol\gamma}$ is a non-degenerate
    critical point of the function
    $\lambda_{\phi+1}(H_{\boldsymbol{\gamma}})$, and
  \item the Morse index of the critical point
    $\boldsymbol{\gamma} = \widetilde{\boldsymbol\gamma}$
    is equal to $n-\nu$.
  \end{enumerate}
\end{theorem}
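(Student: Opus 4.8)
The argument parallels the proof of Theorem~\ref{thm:main_cut} almost line by line, with the number of cuts $\beta$ replaced throughout by $\eta=1+\phi-\nu$ and the cut tree $T$ replaced by the partially cut graph $T'$, which still carries the $\beta-\eta$ cycles of $\Gamma$ that contain no zero of $\psi$. The key point is that every tool assembled in Sections~\ref{sec:qform}--\ref{sec:proof_main_cut_proof} was stated for an arbitrary graph, not only for a tree: Lemma~\ref{sum} and Lemma~\ref{lem:wronskian_cutpoints} (the Wronskian is a flow on any graph), the decomposition $f=f_0+\sum_{j=1}^\eta s_j\rho_{j,\lambda}$ of Remark~\ref{rem:functiondecomp} (the $\rho_{j,\lambda}$ now live on $T'$ with its $\eta$ cut points, $f_0$ ranges over the domain $D$ obtained by regluing all cuts, and existence of $\rho_{j,\lambda}$ for $\lambda$ near $\xi$ follows, exactly as in Remark~\ref{rem:functiondecomp}, from strict Neumann--Dirichlet interlacing), and Lemmas~\ref{lem:index} and~\ref{T:reduction}. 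Hence the computation of Part~3 goes through verbatim: one re-parametrises $F_3$ as $F_4$, each of the $\eta$ two-by-two Hessian blocks (with vanishing diagonal $\gamma_j$-entry and off-diagonal entry $2\psi(c_j^+)\neq0$) contributes one positive and one negative direction, Lemma~\ref{lem:wronskian_cutpoints} applied to $W(\psi,\rho_{j,\xi})$ still yields $R_j(\xi)=\tilde\gamma_j$, and adding the index $n$ of the $H^0$-block gives a non-degenerate critical point of $F_3$ at $(\xi,\psi,\widetilde{\boldsymbol\gamma})$ of index $n+\eta$. Provided $\xi$ is a simple eigenvalue of $H_{\widetilde{\boldsymbol\gamma}}$ on $T'$ (established below), restricting to the eigenvalue manifold via Lemma~\ref{T:reduction} yields a non-degenerate critical point of $\lambda_{m^*}(H_{\boldsymbol\gamma})$ of index $n+\eta-m^*$, where $m^*$ is the position of $\xi$ in the spectrum of $H_{\widetilde{\boldsymbol\gamma}}$ on $T'$.

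It therefore only remains to handle the analogue of Part~1: that $\xi$ is simple for $H_{\widetilde{\boldsymbol\gamma}}$ on $T'$ and that $m^*=\phi+1$ (on a tree both were immediate from Courant-sharpness, but $T'$ is not a tree). Simplicity I would obtain from a circulation argument: if $\chi$ were a second eigenfunction at $\xi$, then by Lemma~\ref{sum} the Wronskian $W(\psi,\chi)$ is a flow on $T'$ without sources or sinks, hence a sum of cycle flows; but each cycle of $T'$ lies in a region where $\psi\neq0$, and a constant nonzero value $c$ of $W(\psi,\chi)$ on such a cycle would force $\chi/\psi$ to be multivalued (its increment around the cycle is $\oint c/\psi^2\neq0$), so $W(\psi,\chi)\equiv0$; then $\chi$ is a scalar multiple of $\psi$ on each edge, the scalar being forced to agree across the (simple) zeros of $\psi$ and across the vertices by continuity of $\chi$ and $\chi'$. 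For the count, note first that imposing Dirichlet conditions at the $\phi$ zeros of $\psi$ decomposes $T'$ into connected subgraphs on each of which $\psi$ is strictly signed, hence is the ground state with eigenvalue $\xi$; thus the resulting operator has $\xi$ at the bottom of its spectrum, and since its form domain has codimension $\phi$ in that of $H_{\widetilde{\boldsymbol\gamma}}$, the form $h_{\widetilde{\boldsymbol\gamma}}-\xi\langle\cdot,\cdot\rangle$ is nonnegative on a subspace of codimension $\phi$, so $m^*\le\phi+1$. For the reverse inequality $m^*\ge\phi+1$ I would produce a $\phi$-dimensional subspace on which $h_{\widetilde{\boldsymbol\gamma}}-\xi\langle\cdot,\cdot\rangle$ is negative definite: for each zero $z$, relaxing the Dirichlet condition at $z$ alone fuses two of the above subgraphs into one on which the restriction of $\psi$ changes sign and so is not the ground state, so the ground state of that fused graph, extended by zero, is a direction along which $h_{\widetilde{\boldsymbol\gamma}}-\xi\langle\cdot,\cdot\rangle<0$ localised near $z$.

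The main obstacle is making this last step rigorous, since the $\phi$ negative directions attached to individual zeros overlap on the subgraphs they share: one must check that $h_{\widetilde{\boldsymbol\gamma}}-\xi\langle\cdot,\cdot\rangle$ is negative \emph{definite} on their span, equivalently that relaxing the Dirichlet conditions one zero at a time never leaves an eigenvalue pinned exactly at $\xi$. I expect this to follow from a careful accounting of the quadratic form written in the $\rho_{j,\lambda}$-coordinates — the very blocks appearing in the $F_4$ computation, now read as a count of negative directions — or, alternatively, from a continuity/perturbation comparison of $H_{\widetilde{\boldsymbol\gamma}}$ on $T'$ with the fully cut tree $T$, where Theorem~\ref{thm:main_cut} already identifies $\psi$ as eigenfunction number $\phi+1$; the simplicity of $\xi$ on $T'$ proved above is exactly what excludes the off-by-one ambiguity in that comparison. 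Granting $m^*=\phi+1$, Parts~2 and~3 then deliver non-degeneracy of the critical point $\boldsymbol\gamma=\widetilde{\boldsymbol\gamma}$ of $\lambda_{\phi+1}(H_{\boldsymbol\gamma})$ and Morse index $n+\eta-(\phi+1)=n-\nu$, completing the proof.
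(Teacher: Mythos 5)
Your overall strategy matches the paper's: rerun the $F_3/F_4$ machinery of section~\ref{sec:proof_main_cut_proof} with $\eta$ cuts in place of $\beta$, and observe that the only tree-specific input in the proof of Theorem~\ref{thm:main_cut} is that $\xi=\lambda_n(H^0)$ is a \emph{simple} eigenvalue of $H_{\widetilde{\boldsymbol\gamma}}$ on the partially cut graph and sits at position $\phi+1$ in its spectrum. That input is exactly what the paper isolates as Theorem~\ref{thm:simple}, and it is precisely here that your proposal has a genuine, and acknowledged, gap: you never establish $m^*\ge\phi+1$ (equivalently $\lambda_\phi(H_{\widetilde{\boldsymbol\gamma}})<\xi$). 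Your proposed route --- one ``fused ground state'' negative direction per zero --- founders, as you admit, on the negative definiteness of the form on the span of overlapping directions, and you only express the hope that this can be fixed. Without this step Part~(1) is unproved and the Morse index in Part~(3) is only bounded, not computed. The paper closes this by combining simplicity with the nodal-count bounds of Theorem 5.2.8 of \cite{BK_graphs} ($\nu'\le n'$ and $n'\le\phi+1$, where $\nu'=\phi+1$ because no zero lies on a cycle of the cut graph), which forces $n'=\phi+1$. Alternatively, a short min-max argument fills your gap without any overlapping directions: the restrictions of $\psi$ to its $\phi+1$ nodal domains on the cut graph, extended by zero, lie in $D_{\boldsymbol\gamma}$, have pairwise disjoint supports, and satisfy $h_{\widetilde{\boldsymbol\gamma}}[f]=\xi\|f\|^2$ exactly on their $(\phi+1)$-dimensional span (all boundary terms vanish at the zeros), so $\lambda_{\phi+1}(H_{\widetilde{\boldsymbol\gamma}})\le\xi$; together with the simplicity of $\xi$ this yields $m^*\ge\phi+1$.

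A secondary remark: your simplicity argument is genuinely different from the paper's (which proceeds by induction on the number of zeros, cutting at a zero and using the strict interlacing of Lemma~\ref{lem:inequality}), and its core idea --- the Wronskian of two eigenfunctions is a divergence-free flow by Lemma~\ref{sum}, which must vanish because $\psi\neq0$ on the surviving cycles --- is sound, but the step ``the flow is a sum of cycle flows and a nonzero constant value $c$ on a cycle makes $\chi/\psi$ multivalued'' is not justified as stated: the Wronskian is only constant edge-by-edge, and when surviving cycles share vertices or edges (e.g.\ a theta subgraph) it need not be constant along any single cycle, nor is the cycle decomposition canonical. The argument can be repaired --- the flow vanishes on every bridge, and on the $2$-edge-connected core, where $\psi\neq0$, the function $u=\chi/\psi$ satisfies $(\psi^2u')'=0$ edgewise with Kirchhoff-type conditions, so an energy or maximum-principle argument forces $u$ to be constant and hence $W\equiv0$ --- but as written this, too, needs tightening. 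The remainder of your reduction (the $\eta$ two-by-two Hessian blocks, the Wronskian identity $R_j(\xi)=\tilde\gamma_j$ via Lemma~\ref{lem:wronskian_cutpoints}, and the two applications of Lemma~\ref{T:reduction} giving index $n+\eta-(\phi+1)=n-\nu$) agrees with the paper.
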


We will map out the proof of the theorem in
section~\ref{sec:few_zeros_proof} below, after explaining its
significance to the question of equipartitions.

\begin{theorem}
  \label{thm:partitions_gen}
  Suppose the $n$-th eigenvalue of $\Gamma$ is simple and its
  eigenfunction $\psi$ is non-zero on vertices.  Denote by $\phi$ the
  number of internal zeros of $\psi$ and by $\nu$ the number of its nodal
  domains.  Then the $\phi$-equipartitions in the vicinity of the
  nodal partition of $\psi$ are parametrized by the variables
  $\boldsymbol{\gamma} = (\gamma_1,\ldots,\gamma_\eta)$.

  The nodal partition of $\psi$ corresponds to the point
  $\widetilde{\boldsymbol\gamma} =
  (\tilde{\gamma}_1,\ldots,\tilde{\gamma}_\eta)$ and is a
  non-degenerate critical point of the functional $\Lambda$
  (equation~(\ref{eq:Lambda_def})) on the set of equipartitions.  The
  Morse index of the critical point is equal to $n-\nu$.
\end{theorem}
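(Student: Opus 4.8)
The plan is to deduce Theorem~\ref{thm:partitions_gen} from the strengthened cut theorem, Theorem~\ref{thm:main_cut_2}, in exactly the way Corollary~\ref{cor:partition} was obtained from Theorem~\ref{thm:main_cut}; the only genuinely new ingredient is the parametrization of the nearby equipartitions by the modified operator $H_{\boldsymbol\gamma}$ with the $\eta=1+\phi-\nu$ cut points produced by the selective cutting procedure of Section~\ref{sec:few_zeros}. The argument has two parts: first, to construct a diffeomorphism from a neighborhood of $\widetilde{\boldsymbol\gamma}$ in $\R^\eta$ onto a neighborhood of the nodal partition of $\psi$ inside the set of $\phi$-equipartitions, under which the energy functional $\Lambda$ of \eqref{eq:Lambda_def} pulls back to the map $\boldsymbol\gamma\mapsto\lambda_{\phi+1}(H_{\boldsymbol\gamma})$; and second, to invoke Parts~(2) and~(3) of Theorem~\ref{thm:main_cut_2}, which say that this map has a non-degenerate critical point at $\widetilde{\boldsymbol\gamma}$ of Morse index $n-\nu$, and to transport these properties through the diffeomorphism, using that diffeomorphisms preserve criticality, non-degeneracy and index.

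For the parametrization, given $\boldsymbol\gamma$ near $\widetilde{\boldsymbol\gamma}$ I take the normalized $(\phi+1)$-th eigenfunction $g=g(\boldsymbol\gamma)$ of $H_{\boldsymbol\gamma}$. By Part~(1) of Theorem~\ref{thm:main_cut_2}, $g=\psi$ at $\boldsymbol\gamma=\widetilde{\boldsymbol\gamma}$, and $\psi$ has exactly $\phi$ internal zeros, all simple (a double zero would make $\psi$ vanish identically on an edge, hence at an adjacent vertex, contrary to hypothesis) and all away from the vertices and cut points. Since $\lambda_{\phi+1}(H_{\boldsymbol\gamma})$ stays simple near $\widetilde{\boldsymbol\gamma}$ and $g$ depends analytically on $\boldsymbol\gamma$, the function $g(\boldsymbol\gamma)$ still has exactly $\phi$ simple zeros, none at a vertex, and these zeros transplant to a proper $\phi$-partition $P(\boldsymbol\gamma)$ of $\Gamma$ that is combinatorially the nodal partition of $\psi$ (same edges, same $\nu$ subgraphs), with $P(\widetilde{\boldsymbol\gamma})$ the nodal partition itself. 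Exactly as in \cite{Banetal_cmp12}, each subgraph $\Gamma_i$ carries a non-vanishing (hence ground-state) piece of $g$ and therefore satisfies $\lambda_1(\Gamma_i)=\lambda_{\phi+1}(H_{\boldsymbol\gamma})$, so $P(\boldsymbol\gamma)$ is an equipartition with $\Lambda(P(\boldsymbol\gamma))=\lambda_{\phi+1}(H_{\boldsymbol\gamma})$.

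The remaining point is that $\boldsymbol\gamma\mapsto P(\boldsymbol\gamma)$ is a diffeomorphism onto a neighborhood of the nodal partition within the equipartition set. The inverse is built as in Section~\ref{sec:mappingR}: to an equipartition $P$ close to the nodal one, associate the function $\varphi$ on $T$ obtained by patching the sign-adjusted ground states of its subgraphs, read off $\gamma_j=\varphi'(c_j^+)/\varphi(c_j^+)$, and verify that $\varphi$ is the $(\phi+1)$-th eigenfunction of $H_{\boldsymbol\gamma}$; analyticity of eigenfunctions in the parameters makes both directions smooth. The step I expect to require the most care is the combinatorics of the selective cutting: since the eigenfunctions of $H_{\boldsymbol\gamma}$ on $T$ are no longer Courant-sharp globally ($T$ still carries $\beta-\eta$ cycles), one must argue that near $\widetilde{\boldsymbol\gamma}$ the $(\phi+1)$-th eigenfunction nonetheless has exactly $\phi$ zeros, that cutting only along the $\eta$ cycle-edges that carry a zero of $\psi$ (equations~\eqref{eq:eta_alt_def}--\eqref{eq:eta}) parametrizes precisely the equipartitions near the nodal one, and that the surviving cycles do not disturb the identity $\Lambda=\lambda_{\phi+1}(H_{\boldsymbol\gamma})$. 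These are the details merely outlined in \cite{Banetal_cmp12,BerRazSmi_jpa12}, and they are supplied in Section~\ref{sec:few_zeros_proof}; once they are in place, the two parts combine to give the stated parametrization, non-degeneracy, and Morse index $n-\nu$.
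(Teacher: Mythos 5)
Your proposal follows essentially the same route as the paper: parametrize the nearby $\phi$-equipartitions by the zeros of the $(\phi+1)$-th eigenfunction of the selectively cut operator $H_{\boldsymbol\gamma}$ (with the inverse built by gluing ground states of the partition subgraphs, as the paper sketches following \cite{Banetal_cmp12}), identify $\Lambda$ with $\lambda_{\phi+1}(H_{\boldsymbol\gamma})$ under this correspondence, and then read off criticality, non-degeneracy, and the Morse index $n-\nu$ from Theorem~\ref{thm:main_cut_2}. The extra continuity/simple-zero details you supply are consistent with, and slightly more explicit than, the paper's outline, so the argument is correct and matches the intended proof.
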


The mapping from $(\gamma_1,\ldots,\gamma_\eta)$ to the equipartitions
is constructed as follows (see \cite{Banetal_cmp12} for more details):
the partition in question is generated by the zeros of the
$(\phi+1)$-th eigenfunction of the operator
$H_{\boldsymbol{\gamma}}$ placed upon the original graph $\Gamma$.
Indeed, the groundstates of the nodal domains can be obtained by
cutting the eigenfunction at zeros and gluing the cut points together
(conditions \eqref{eq:some_cut_cond} ensure the gluing is possible).
To verify that all equipartitions are obtainable in this way we simply
reverse the process and construct an eigenfunction of
$H_{\boldsymbol{\gamma}}$ from the groundstates of the nodal domains.
The gluing is now done at zeros, and it can be done recursively (since
all cycles with zeros on them have been cut).

Once the parameterization of the equipartitions is accomplished, the
Morse index result follows immediately from
Theorem~\ref{thm:main_cut_2}.

\subsection{Proof of Theorem~\ref{thm:main_cut_2}}
\label{sec:few_zeros_proof}

The proof of Theorem~\ref{thm:main_cut_2} is identical to the proof of
Theorem~\ref{thm:main_cut} once we collect some preliminary results.
The following lemma can be found in \cite{BK_graphs} (Theorem 3.1.8
with a slight modification).

\begin{lemma}
  \label{lem:inequality}
  Let $\Gamma_{\alpha'}$ be the graph obtained from the graph
  $\Gamma_\alpha$ by changing the coefficient of the $\delta$-type
  condition at a vertex $v$ from $\alpha$ to $\alpha'$ (conditions at
  all other vertices are fixed).  If $-\infty < \alpha < \alpha' \leq
  \infty$ (where $\alpha' = \infty$ corresponds to the Dirichlet
  condition at vertex $v$), then
  \begin{equation}
    \label{eq:interlacing1}
    \lambda_n(\Gamma_\alpha) \leq \lambda_n(\Gamma_{\alpha'})
    \leq \lambda_{n+1}(\Gamma_\alpha).
  \end{equation}
  If the eigenvalue $\lambda_n(\Gamma_{\alpha'})$ is simple and its
  eigenfunction $f$ is such that either $f(v)$ or $\sum f'(v)$ is
  non-zero, then the above inequalities can be made strict.  If, in
  addition, $\alpha' \neq \infty$, the inequalities become
  \begin{equation*}
    \lambda_n(\Gamma_\alpha) < \lambda_n(\Gamma_{\alpha'})
    < \lambda_n(\Gamma_{\infty}) \leq \lambda_{n+1}(\Gamma_\alpha).
  \end{equation*}
\end{lemma}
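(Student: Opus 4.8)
The plan is to argue entirely with the quadratic forms. Write $h_\alpha$ for the form of $H_{\Gamma_\alpha}$ and $R_\alpha(f)=h_\alpha[f]/\|f\|^2$ for the Rayleigh quotient. When $\alpha'<\infty$ the forms $h_\alpha$ and $h_{\alpha'}$ have the same domain and $h_{\alpha'}[f]-h_\alpha[f]=(\alpha'-\alpha)f(v)^2\ge 0$; when $\alpha'=\infty$ the form $h_\infty$ is the restriction of $h_\alpha$ to $\{f:f(v)=0\}$, on which the two forms coincide. In all cases $h_\alpha\le h_{\alpha'}$ on the relevant form domain, and this is a rank-one relation localized at $v$.

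I would first dispatch the non-strict interlacing \eqref{eq:interlacing1}, which is Theorem~3.1.8 of \cite{BK_graphs}. The bound $\lambda_n(\Gamma_\alpha)\le\lambda_n(\Gamma_{\alpha'})$ is immediate from the min-max principle and $h_\alpha\le h_{\alpha'}$. For $\lambda_n(\Gamma_{\alpha'})\le\lambda_{n+1}(\Gamma_\alpha)$ I would apply min-max to an $n$-dimensional test space: intersect the span of the first $n+1$ eigenfunctions of $H_{\Gamma_\alpha}$ with the codimension-one subspace $\{f(v)=0\}$; on that intersection $R_{\alpha'}=R_\alpha\le\lambda_{n+1}(\Gamma_\alpha)$.

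The heart of the lemma is the strictness. Suppose one of the inequalities in \eqref{eq:interlacing1} degenerates, both sides equal to $\mu$, and $\lambda_n(\Gamma_{\alpha'})$ is simple with eigenfunction $f$. In either case I would run a dimension count to manufacture a nonzero $g^\ast$ that is forced to be a multiple of $f$ and to vanish at $v$. For the lower equality take $N=\operatorname{span}$ of the first $n$ eigenfunctions of $H_{\Gamma_{\alpha'}}$ and $P$ the orthogonal projection onto the first $n-1$ eigenfunctions of $H_{\Gamma_\alpha}$; then $\dim(N\cap\ker P)\ge 1$, and any $0\ne g^\ast\in N\cap\ker P$ satisfies $\mu\le R_\alpha(g^\ast)\le R_{\alpha'}(g^\ast)\le\mu$, so both Rayleigh quotients equal $\mu$. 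The equality $R_\alpha(g^\ast)=R_{\alpha'}(g^\ast)$ forces $g^\ast(v)=0$, and $R_{\alpha'}(g^\ast)=\mu$ together with $g^\ast\perp$ the lower eigenfunctions forces (by the standard characterization of the minimum of the Rayleigh quotient on such an orthogonal complement) that $g^\ast$ is a $\mu$-eigenfunction of $H_{\Gamma_{\alpha'}}$, hence $g^\ast=cf$ and $f(v)=0$. For the upper equality the same machine works with $N=\{f\in\operatorname{span}(\psi_1,\dots,\psi_{n+1}):f(v)=0\}$ (first $n+1$ eigenfunctions of $H_{\Gamma_\alpha}$) and $P$ the projection onto the first $n-1$ eigenfunctions of $H_{\Gamma_{\alpha'}}$, again yielding $g^\ast=cf$ with $f(v)=0$. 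Once $f(v)=0$: if $\alpha'<\infty$ this already contradicts the hypothesis, since then $\sum f'(v)=\alpha'f(v)=0$ too, so both are zero; if $\alpha'=\infty$ one observes in addition that $g^\ast=cf$ is also an eigenfunction of $H_{\Gamma_\alpha}$ (its Rayleigh quotient against the $H_{\Gamma_\alpha}$-eigenbasis equals $\mu=\lambda_{n+1}(\Gamma_\alpha)$, which forces it into the $\mu$-eigenspace), whence $\sum f'(v)=\alpha f(v)=0$, contradicting $\sum f'(v)\ne 0$. This proves the strict form of \eqref{eq:interlacing1}.

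For the sharpened chain when $\alpha'\ne\infty$, first note the hypothesis then forces $f(v)\ne 0$ outright (otherwise $\sum f'(v)=\alpha'f(v)=0$). The end inequalities $\lambda_n(\Gamma_\alpha)<\lambda_n(\Gamma_{\alpha'})$ and $\lambda_n(\Gamma_\infty)\le\lambda_{n+1}(\Gamma_\alpha)$ are, respectively, the strict statement above and non-strict interlacing \eqref{eq:interlacing1} applied to the pair $(\alpha,\infty)$. The remaining inequality $\lambda_n(\Gamma_{\alpha'})<\lambda_n(\Gamma_\infty)$ I would obtain from the same dimension count applied to the pair $(\alpha',\infty)$: we do not know $\lambda_n(\Gamma_\infty)$ is simple, but we only use simplicity of $\lambda_n(\Gamma_{\alpha'})$, taking $N=\operatorname{span}$ of the first $n$ Dirichlet eigenfunctions (which automatically vanish at $v$) and $P$ the projection onto the first $n-1$ eigenfunctions of $H_{\Gamma_{\alpha'}}$; the extracted $0\ne g^\ast$ is again a multiple of $f$ with $g^\ast(v)=0$, contradicting $f(v)\ne 0$.

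The step I expect to be the main obstacle is the bookkeeping for $\alpha'=\infty$: there the form domain genuinely shrinks rather than the form merely increasing, and $f$ no longer satisfies a Robin-type condition at $v$, so one cannot read off $\sum f'(v)=0$ from $f(v)=0$ directly. Handling this uniformly relies on the extra observation that the auxiliary function $g^\ast$ produced by the dimension count is simultaneously an eigenfunction of the operator carrying the \emph{finite} parameter, so that its vanishing at $v$ does propagate to the derivative sum via the $\delta$-condition \eqref{eq:vconditions}.
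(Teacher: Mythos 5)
Your proof is correct, but there is nothing in the paper to compare it against line by line: the paper does not prove Lemma~\ref{lem:inequality} at all, it simply cites Theorem~3.1.8 of \cite{BK_graphs} ``with a slight modification'', the modification being precisely the strictness clause. So your contribution is a self-contained variational proof of both the cited interlacing and the asserted strictness. Your route --- monotone rank-one change of the quadratic form at $v$, min-max for the non-strict inequalities, and then a dimension count producing a nonzero $g^\ast$ in the intersection of an $n$-dimensional test space with the orthogonal complement of $n-1$ lower eigenfunctions, squeezed so that all Rayleigh quotients coincide --- is sound, and the key extra observation (that $g^\ast$ is simultaneously an eigenfunction of the operator with the \emph{finite} coupling, so that $g^\ast(v)=0$ propagates to $\sum g^{\ast\prime}(v)=0$ via \eqref{eq:vconditions}) is exactly what makes the Dirichlet case $\alpha'=\infty$ work; this is the point the phrase ``slight modification'' glosses over. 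The standard route in \cite{BK_graphs} reaches the same monotonicity/interlacing through the same form comparison (or, for strictness, through analytic perturbation theory, where $\partial\lambda_n/\partial\alpha=f(v)^2$ by Hellmann--Feynman and equality of consecutive eigenvalues forces the eigenfunction to satisfy two distinct conditions at $v$); your squeeze argument buys the strict statement without invoking analyticity in $\alpha$. Two cosmetic points you should fix in a write-up: in the lower-equality case your $g^\ast$ is orthogonal to the first $n-1$ eigenfunctions of $\Gamma_\alpha$, not of $\Gamma_{\alpha'}$, so the conclusion $g^\ast=cf$ should be drawn from $g^\ast\in\operatorname{span}$ of the first $n$ eigenfunctions of $\Gamma_{\alpha'}$ with $R_{\alpha'}(g^\ast)=\lambda_n(\Gamma_{\alpha'})$ and simplicity; and in that same lower case with $\alpha'=\infty$ the parenthetical should read $\mu=\lambda_n(\Gamma_\alpha)$ with orthogonality to the first $n-1$ eigenfunctions of $\Gamma_\alpha$, rather than $\mu=\lambda_{n+1}(\Gamma_\alpha)$, which belongs to the upper-equality case. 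Neither affects the validity of the argument.
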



The following theorem is a generalization of Corollary 3.1.9 of \cite{BK_graphs}.

\begin{theorem}
  \label{thm:simple}
  Let $\Gamma$ be a graph with $\delta$-type conditions at every
  internal vertex and extended $\delta$-type conditions on all leaves.
  Suppose an eigenvalue $\lambda$ of $\Gamma$ has an eigenfunction $f$
  which is non-zero on internal vertices of $\Gamma$.  Further, assume
  that no zeros of $f$ lie on the cycles of $\Gamma$.  Then the
  eigenvalue $\lambda$ is simple and $f$ is eigenfunction number $\phi
  + 1$, where $\phi$ is the number of internal zeros of $f$.
\end{theorem}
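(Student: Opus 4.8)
The plan is to prove Theorem~\ref{thm:simple} by reducing to the tree case via the cutting construction of section~\ref{sec:cutting}, then invoking the classical Courant-sharpness of tree eigenfunctions together with the strict interlacing of Lemma~\ref{lem:inequality}. The hypothesis that no zeros of $f$ lie on the cycles of $\Gamma$ is exactly what makes this reduction possible: we can choose the cut points $c_j$ on the cycles to avoid both the vertices and the zeros of $f$.

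First I would set up the cut tree. Since no zeros of $f$ lie on cycles, pick a spanning tree and choose the $\beta$ cut points $c_j$ on the complementary edges so that $f(c_j)\neq 0$ for every $j$; this is possible because each complementary edge lies on a cycle. Cutting at these points produces the tree $T$, and $f$, viewed on $T$, satisfies the conditions~\eqref{eq:cut_cond} defining $H_{\widetilde{\boldsymbol\gamma}}$ with $\widetilde\gamma_j = f'(c_j^+)/f(c_j^+) = -f'(c_j^-)/f(c_j^-)$ as in~\eqref{eq:gamma_tilde_def}. Thus $\lambda$ is an eigenvalue of $H_{\widetilde{\boldsymbol\gamma}}$ on the tree $T$, with eigenfunction $f$. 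Because $f$ is non-zero on all vertices of $T$ — the internal vertices of $\Gamma$ by hypothesis, and the new cut vertices $c_j^\pm$ by construction — the eigenvalue $\lambda$ of $H_{\widetilde{\boldsymbol\gamma}}$ is \emph{simple} and $f$ is \emph{Courant-sharp}; these are the classical facts for tree operators cited in the proof of Part~\ref{item:same_eig} of Theorem~\ref{thm:main_cut} (see \cite{poketal_mat96,Sch_wrcm06,BerKuc_incol12}). Courant-sharpness says $f$ is eigenfunction number $\phi_T+1$ of $H_{\widetilde{\boldsymbol\gamma}}$, where $\phi_T$ is its number of internal zeros on $T$; since the cut points carry no zeros, $\phi_T = \phi$, so $\lambda = \lambda_{\phi+1}(H_{\widetilde{\boldsymbol\gamma}})$ and it is simple on $T$.

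Next I would transfer simplicity and the index back to $\Gamma$. The operator $H^0$ on $\Gamma$ is a restriction of $H_{\widetilde{\boldsymbol\gamma}}$ to the subspace where continuity is reimposed at the $c_j$, equivalently $H^0$ is obtained from $H_{\widetilde{\boldsymbol\gamma}}$ by a rank-$\beta$ change of conditions at the cut vertices. Iterating the interlacing Lemma~\ref{lem:inequality} one cut point at a time (regluing $c_j^+,c_j^-$ into a single $\delta$-type vertex), and using that at each stage the relevant eigenfunction $f$ is non-zero at the vertex being modified, gives a chain of strict inequalities showing that the eigenvalue does not jump position and stays simple; hence $\lambda$ is a simple eigenvalue of $H^0$ with the same index. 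To pin down which index: gluing back each cut point can only merge two nodal domains or close a cycle, and since by hypothesis every surviving zero already broke its cycle — more precisely, since the cut points have been chosen on all $\beta$ cycles — the relation $\eta = 1+\phi-\nu$ of~\eqref{eq:eta} combined with the Courant-sharp count on the tree forces $f$ to be eigenfunction number $\phi+1$ of $H^0$ as well. (Alternatively, this is precisely the content of Theorem~\ref{thm:main_cut}, Parts~1 and~3, specialized to the case where $\widetilde{\boldsymbol\gamma}$ happens to make $\lambda$ land at eigenvalue $n$; one reads off $n = \phi+1$.)

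The main obstacle I anticipate is the bookkeeping in the interlacing step: one must verify that at \emph{every} intermediate stage of regluing the cut points, the candidate eigenfunction is non-vanishing at the vertex whose condition is being changed, so that strict interlacing (rather than just the weak inequalities) applies and the eigenvalue's ordinal position is preserved throughout. This is where the hypothesis ``no zeros on cycles'' does real work, and where a careless argument could slip. A clean way to organize it is to note that $f$ restricted to $T$ is a fixed function whose vertex values are all nonzero, and that reimposing continuity at $c_j$ is the $\alpha\to\alpha'$ move of Lemma~\ref{lem:inequality} with $f$ itself serving as the relevant eigenfunction at the reglued vertex; applying the strict form $\beta$ times yields both the simplicity of $\lambda$ on $\Gamma$ and the fact that its index equals the tree index $\phi+1$, completing the proof.
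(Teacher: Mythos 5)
Your reduction to the tree is fine (it is the same step as Part~\ref{item:same_eig} of Theorem~\ref{thm:main_cut}), but the transfer back to $\Gamma$ is where the argument breaks. Regluing $c_j^+$ and $c_j^-$ is \emph{not} the $\alpha\to\alpha'$ move of Lemma~\ref{lem:inequality}: that lemma changes the coefficient of a $\delta$-type condition at one fixed vertex, whereas regluing merges two vertices, i.e.\ restricts the form domain by the linear condition $f(c_j^+)=f(c_j^-)$. The correct interlacing for such a restriction only gives $\lambda_m(H_{\widetilde{\boldsymbol\gamma}})\le\lambda_m\le\lambda_{m+1}(H_{\widetilde{\boldsymbol\gamma}})$ one cut at a time, so at each gluing the position of $\lambda$ may drop by one, and nothing in these inequalities prevents $\lambda$ from becoming \emph{degenerate} on the glued graph: a putative second eigenfunction $g$ of $H^0$ at $\lambda$ need not satisfy the Robin data $\widetilde{\boldsymbol\gamma}$ at the cut points, hence does not lift to an eigenfunction of $H_{\widetilde{\boldsymbol\gamma}}$, and simplicity on the tree says nothing about it. The strictness clause of any interlacing lemma refers to eigenfunctions of the modified graph, which you cannot verify by pointing at $f$ alone. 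A decisive sanity check: your argument barely uses the hypothesis that no zeros lie on cycles --- since zeros of $f$ are isolated and $f$ is nonzero at vertices, cut points with $f(c_j)\neq0$ can always be chosen on the complementary edges. If the transfer step were valid, you would have proved $n=\phi+1$ for \emph{every} simple eigenvalue with eigenfunction nonvanishing on vertices, i.e.\ zero nodal surplus always, contradicting Theorem~\ref{thm:main_mag} (the surplus is a Morse index that ranges over $0,\dots,\beta$). In truth the index can drift down by up to $\beta$ under the $\beta$ gluings, and the whole content of the theorem is that the ``no zeros on cycles'' hypothesis forbids this drift --- which your interlacing bookkeeping cannot see.

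The paper's proof takes a genuinely different route, and the hypothesis enters exactly where your version leaves it idle. Simplicity is proved by induction on $\phi$: assuming a second eigenfunction $g$, one cuts $\Gamma$ at a \emph{zero} $\zeta$ of $f$ --- this disconnects the graph precisely because zeros avoid cycles --- and on a component where $g$ is not a multiple of $f$, $f$ satisfies the Dirichlet condition at the new leaf $\zeta$ (with $f'(\zeta)\neq0$) while $g$ satisfies a $\delta$-type condition with $\alpha\neq\infty$ (by the induction hypothesis); the strict form of Lemma~\ref{lem:inequality} then forbids the two graphs $\Gamma'_\alpha$ and $\Gamma'_\infty$ from sharing the eigenvalue $\lambda$. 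The position $n=\phi+1$ is then obtained not from the tree but from $\nu=\phi+1$ (Remark~\ref{rem:no_zero_cycle}) combined with the general bounds $n-\beta\le\nu\le n$ and $n\le\phi+1\le n+\beta$ for simple eigenvalues with nonvanishing eigenfunction. Finally, your fallback that the index claim ``is precisely the content of Theorem~\ref{thm:main_cut}, Parts 1 and 3'' does not work: that theorem takes the graph index $n$ as given and locates $\lambda$ in the spectrum of the cut operator; it cannot determine $n$, and in the setting where Theorem~\ref{thm:simple} is actually needed (the partially cut graph in the proof of Theorem~\ref{thm:main_cut_2}) it is exactly the statement being established.
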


\begin{remark}
  \label{rem:no_zero_cycle}
  The condition that no zeros lie on the cycles of the graph $\Gamma$
  is equivalent to $\eta=0$ (see equation~(\ref{eq:eta_alt_def})) or
  to the number of nodal domains of $f$ being equal to $\phi+1$.
\end{remark}

\begin{proof}
  We use induction on the number of internal zeros of $f$ to show that
  the eigenvalue is simple.  If $f$ has no internal zeros, then we
  know $f$ corresponds to the groundstate eigenvalue, which is
  simple.

  Now suppose $f$ has $\phi > 0$ internal zeros.  By way of
  contradiction, assume that $\lambda$ is not simple.  Choose an
  arbitrary zero $\zeta$ of $f$ and another eigenfunction $g$.  Cut
  $\Gamma$ at $\zeta$; making this cut will disconnect the graph into
  two subgraphs since $\zeta$ cannot lie on a cycle of $\Gamma$.  On at
  least one of these subgraphs, $g$ is non-zero and not a multiple of
  $f$ (otherwise, it cannot be a different eigenfunction).  We will
  now analyze the eigenfunctions on this subgraph $\Gamma'$.

  On the graph $\Gamma'$, $f$ and $g$ satisfy the same $\delta$-type
  conditions at all vertices except possibly the new leaf $\zeta$.  We
  denote by $\Gamma'_\tau$ as the graph $\Gamma'$ with the conditions
  $\Phi'(\zeta) = \tau\Phi(\zeta)$.  We know that $(\lambda,f)$ is an
  eigenpair on $\Gamma'_{\infty}$ and similarly, there exists $\alpha$
  such that $(\lambda,g)$ is an eigenpair on $\Gamma'_{\alpha}$.
  However, since $\Gamma'_{\infty}$ contains fewer internal zeros of
  $f$ than $\Gamma$ does, by the inductive hypothesis $\lambda$ is
  simple on $\Gamma'_{\infty}$ so $\alpha \neq \infty$.

  Observe that $f'(\zeta)$ is non-zero; if it was, the function $f$
  would be identically zero on the whole edge containing $\zeta$ and,
  therefore, at the end-vertices of the edge.  Thus, the
  inequalities in (\ref{eq:interlacing1}) with $\alpha' =
  \infty$ become strict and $\Gamma'_\alpha$ and
  $\Gamma'_{\infty}$ cannot have the same eigenvalue $\lambda$.

  Now we show that $f$ is eigenfunction number $n = \phi + 1$.  By
  Remark~\ref{rem:no_zero_cycle}  there are $\nu
  = \phi + 1$ nodal domains.  Since $\lambda$ is simple, we know from
  Theorem 5.2.8 of \cite{BK_graphs} that
 \begin{equation*}
   n - \beta \leq \nu \leq n\qquad\mbox{and}\qquad
   n \leq \phi + 1 \leq n + \beta
 \end{equation*}
 where $\beta$ is the Betti number of $\Gamma$.  However, since $\nu =
 \phi + 1$, both inequalities hold only if $\nu = n = \phi + 1$ so $f$
 is indeed the eigenfunction number $\phi + 1$.
\end{proof}

Below we only include the parts of the proof that differ from Theorem \ref{thm:main_cut}.

\begin{proof}[Proof of Theorem~\ref{thm:main_cut_2}]
  In the proof of Theorem~\ref{thm:main_cut}
  (section~\ref{sec:proof_main_cut_proof}), the fact that
  $H_{\boldsymbol{\gamma}}$ is an operator on a tree was used to show
  that its eigenvalue is simple and to find the sequence number of
  $\lambda$ in the spectrum.  Theorem~\ref{thm:simple} allows us to do
  the same in the graph with fewer cuts.

  Indeed, on the cut graph $\Gamma_{\widetilde{\boldsymbol\gamma}}$,
  $\psi$ is non-zero on all cycles and internal vertices and therefore
  by Theorem \ref{thm:simple}, the eigenvalue is simple and has number
  $\phi + 1$ in the spectrum of $H_{\boldsymbol{\widetilde{\gamma}}}$.
  Since the eigenvalue is simple, we can still apply Lemma
  \ref{lem:index}.  The rest of the proof goes through, with the
  amendment that the index of the critical point of $F_3$ is $n+\eta$,
  since we now have $\eta$ cuts instead of $\beta$ cuts.  Using
  equation (\ref{eq:eta}), we finally get that the Morse index of the
  critical point is
  \begin{equation*}
    (n+\eta)-(\phi-1) = n+(1+\phi-\nu)-\phi-1 = n - \nu.
  \end{equation*}
\end{proof}

\section*{Acknowledgment}
\label{sec:ack}

We are grateful to Y.~Colin~de~Verdi\`ere for numerous insightful
discussions and pointing out errors in earlier versions of the proof
of Theorem~\ref{thm:main_cut}.  The crucial idea that extending
$\alpha_j$ into the complex plane might be fruitful was suggested to
us by P.~Kuchment.  For this and many other helpful suggestions we are
extremely grateful.  We would also like to thank R.~Band, J.~Robbins,
and U.~Smilansky for encouragement and discussions and the anonymous
referees for numerous corrections.  GB was partially supported by the
NSF grant DMS-0907968.

\bibliographystyle{cj}
\bibliography{nodal_mag}

\def\cprime{$'$}
\begin{thebibliography}{10}

\bibitem{Kuc_wrm02}
Kuchment, P. (2002) Graph models for waves in thin structures. {\em Waves
  Random Media\/}, {\bf 12}, R1--R24.

\bibitem{GnuSmi_ap06}
Gnutzmann, S. and Smilansky, U. (2006) Quantum graphs: Applications to quantum
  chaos and universal spectral statistics. {\em Adv. Phys.\/}, {\bf 55},
  527--625.

\bibitem{Kuc_incol08}
Kuchment, P. (2008) Quantum graphs: an introduction and a brief survey. {\em
  Analysis on graphs and its applications\/}, vol.~77 of {\em Proc. Sympos.
  Pure Math.\/}, pp. 291--312, Amer. Math. Soc.

\bibitem{BerCarFulKuc_eds06}
Berkolaiko, G., Carlson, R., Fulling, S., and Kuchment, P. (eds.) (2006) {\em
  Quantum graphs and their applications\/}, vol. 415 of {\em Contemp. Math.\/},
  Amer. Math. Soc.

\bibitem{ExnKeaKuc_eds08}
Exner, P., Keating, J.~P., Kuchment, P., Sunada, T., and Teplyaev, A. (eds.)
  (2008) {\em Analysis on graphs and its applications\/}, vol.~77 of {\em Proc.
  Sympos. Pure Math.\/}, Amer. Math. Soc.

\bibitem{BK_graphs}
Berkolaiko, G. and Kuchment, P. (2013) {\em Introduction to Quantum Graphs\/},
  vol. 186 of {\em Mathematical Surveys and Monographs\/}. Amer. Math. Soc.

\bibitem{BanOreSmi_pspm08}
Band, R., Oren, I., and Smilansky, U. (2008) Nodal domains on graphs---how to
  count them and why? {\em Analysis on graphs and its applications\/}, vol.~77
  of {\em Proc. Sympos. Pure Math.\/}, pp. 5--27, Amer. Math. Soc.

\bibitem{BanBerSmi_ahp12}
Band, R., Berkolaiko, G., and Smilansky, U. (2012) Dynamics of nodal points and
  the nodal count on a family of quantum graphs. {\em Ann. Henri
  Poincar{\'e}\/}, {\bf 13}, 145--184.

\bibitem{Banetal_cmp12}
Band, R., Berkolaiko, G., Raz, H., and Smilansky, U. (2012) The number of nodal
  domains on quantum graphs as a stability index of graph partitions. {\em
  Comm. Math. Phys.\/}, {\bf 311}, 815--838.

\bibitem{BerRazSmi_jpa12}
Berkolaiko, G., Raz, H., and Smilansky, U. (2012) Stability of nodal structures
  in graph eigenfunctions and its relation to the nodal domain count. {\em J.
  Phys. A\/}, {\bf 45}, 165203.

\bibitem{BerKucSmi_gafa12}
Berkolaiko, G., Kuchment, P., and Smilansky, U. (2012) Critical partitions and
  nodal deficiency of billiard eigenfunctions. {\em Geom. Funct. Anal.\/}, {\bf
  22}, 1517--1540, also arXiv:1107.3489 [math-ph].

\bibitem{Ber_prep11}
Berkolaiko, G. (2013) Nodal count of graph eigenfunctions via magnetic
  perturbation. {\em Anal. PDE\/}, {\bf 6}, 1213--1233, also {\tt
  arXiv:1110.5373 [math-ph]}.

\bibitem{Ver_prep12}
Colin~de Verdi{\`e}re, Y. (2012) Magnetic interpretation of the nodal defect on
  graphs. {\em Anal. PDE\/}, {\bf 6}, 1235--1242, also {\tt arXiv:1201.1110
  [math-ph]}.

\bibitem{Ban_prep12}
Band, R. (2014) The nodal count \{0, 1, 2, 3, \ldots\} implies the graph is a
  tree. {\em Phil. Trans. Roy. Soc. A\/}, {\bf 372}, 20120504, also {\tt
  arXiv:1212.6710 [math-ph]}.

\bibitem{KosSch_cmp03}
Kostrykin, V. and Schrader, R. (2003) Quantum wires with magnetic fluxes. {\em
  Comm. Math. Phys.\/}, {\bf 237}, 161--179.

\bibitem{Rue_prep11}
Rueckriemen, R. (2011) Recovering quantum graphs from their {B}loch spectrum,
  to appear in Ann. Inst. Fourier, preprint {\tt arXiv:1101.6002}.

\bibitem{BerKuc_incol12}
Berkolaiko, G. and Kuchment, P. (2012) Dependence of the spectrum of a quantum
  graph on vertex conditions and edge lengths. {\em Spectral Geometry\/},
  vol.~84 of {\em Proceedings of Symposia in Pure Mathematics\/}, Amer. Math.
  Soc., preprint {\tt arXiv:1008.0369}.

\bibitem{Bol_GTM98}
Bollob{\'a}s, B. (1998) {\em Modern graph theory\/}, vol. 184 of {\em Graduate
  Texts in Mathematics\/}. Springer-Verlag.

\bibitem{Spr_amm85}
Spring, D. (1985) On the second derivative test for constrained local extrema.
  {\em Amer. Math. Monthly\/}, {\bf 92}, 631--643.

\bibitem{Sch_wrcm06}
Schapotschnikow, P. (2006) Eigenvalue and nodal properties on quantum graph
  trees. {\em Waves Random Complex Media\/}, {\bf 16}, 167--178.

\bibitem{poketal_mat96}
Pokorny{\u\i}, Y.~V., Pryadiev, V.~L., and Al{\cprime}-Obe{\u\i}d, A. (1996) On
  the oscillation of the spectrum of a boundary value problem on a graph. {\em
  Mat. Zametki\/}, {\bf 60}, 468--470.

\bibitem{Kato_book}
Kato, T. (1976) {\em Perturbation theory for linear operators\/}.
  Springer-Verlag, second edn., {G}rundlehren der {M}athematischen
  {W}issenschaften, Band 132.

\bibitem{Cou_ngwgmp23}
Courant, R. (1923) Ein allgemeiner {S}atz zur {T}heorie der {E}igenfuktionen
  selbstadjungierter {D}ifferentialausdr\"ucke. {\em Nachr. Ges. Wiss.
  G\"ottingen Math Phys\/}, pp. 81--84.

\bibitem{CourantHilbert_volume1}
Courant, R. and Hilbert, D. (1953) {\em Methods of mathematical physics. {V}ol.
  {I}\/}. Interscience Publishers, Inc., New York, N.Y.

\bibitem{HelHofTer_aihp09}
Helffer, B., Hoffmann-Ostenhof, T., and Terracini, S. (2009) Nodal domains and
  spectral minimal partitions. {\em Ann. Inst. H. Poincar\'e Anal. Non
  Lin\'eaire\/}, {\bf 26}, 101--138.

\end{thebibliography}

\end{document}